\newtheorem{theorem}{Theorem}
\newtheorem{definition}{Definition}
\newtheorem{lemma}{Lemma}
\newtheorem{proposition}{Proposition}
\newcommand{\FT}[0]{\mathbb{F}_2}
\newcommand{\even}[0]{\operatorname{even}}
\newcommand{\odd}[0]{\operatorname{odd}}
\newcommand{\join}[2]{#1 \vee #2}
\begin{document}
\title{Nested perfect toroidal arrays}

\author{Ver\'onica Becher \and Olivier Carton}
\date{\today}

\begin{abstract}
  We introduce two-dimensional toroidal arrays that are a variant of the de
  Bruijn tori. We call them nested perfect toroidal arrays.  Instead of
  asking that every array of a given size has exactly one occurrence, we
  partition the positions in congruence classes and we ask exactly one
  occurrence in each congruence class.  We also ask that this property
  applies recursively to each of the subarrays.  We give a method to
  construct nested perfect toroidal arrays based on Pascal triangle matrix
  modulo~$2$.  For the two-symbol alphabet, and for $n$ being a power of $2$,
  our method yields $2^{n^2+n-1}$ different nested perfect toroidal arrays
  allocating all the different $n\times n$ arrays in each congruence class
  that arises from taking the line number modulo~$n$ and the column number
  modulo~$n$.
\end{abstract}

\maketitle

\noindent
\textbf{Keywords:} de Bruijn tori, nested perfect necklaces, Pascal triangle

\noindent
\textbf{Mathematics Subject Classification:} 05B05,      11C20


\noindent
\textbf{CCS:} Mathematics of computing, Discrete mathematics, Combinatorial problems



{\tableofcontents}

\section{Introduction and statement of results}

A toroidal array of size $n\times n$ is an array where the line numbers are
considered modulo~$n$ and the column numbers are considered modulo~$n$. In
this note we consider  toroidal arrays of symbols in a finite
alphabet.  The problem of constructing toroidal arrays of minimal size
that allocate a given family of smaller arrays goes back to the 1960s, see
for instance~\cite{Gordon,Reed1962}. The constructions of toroidal arrays
where each member of the family occurs exactly once generalize the
classical unidimensional de Bruijn sequences to two dimensional arrays, and
they are known as de Bruijn tori or perfect maps.  
There has been significant effort in solving the problem of determining the existence of these toroidal arrays for
the different subarrays sizes, and in giving
construction methods, for instance the work
of~\cite{Ma1984,Fan1985,sloane,HI,Etzion,Paterson1994,CDG}.
The first achievements focus on arrays of $0$s and~$1$s. Subsequent
work solves the existence problem and the construction problem for
arbitrary alphabets. The ability to recover efficiently any given  subarray receives special attention~\cite{MY}. There is also work on
constructions for three dimensional arrays~\cite{3dim}.

In this note we introduce a variant of the de Bruijn tori. 
We call them nested perfect toroidal arrays.  Instead of
asking that every array of a given size has exactly one occurrence, we
partition the positions in congruence classes and we ask exactly one
occurrence in each congruence class.  We also ask that this property
applies recursively to the subarrays.

Nested perfect toroidal arrays are the two-dimensional version of the work
done by the authors in the unidimensional case, that they called
\emph{nested perfect necklaces}~\cite{BecherCarton2019}. Nested perfect
necklaces are a special case of the \emph{perfect necklaces}~\cite{ABFY}.

Along the sequel we number the lines and columns starting at~$0$ (instead
of starting at~$1$).  For any two-dimensional array toroidal~$A$ the
subarray of size $n × n$ at position $(k,ℓ)$ is the array made of the lines
from $k$ to $k+n-1$ and the columns from $ℓ$ to $ℓ+n-1$ where all these
indices are taken modulo the number of lines and columns of~$A$.  Note that
$(0,0)$ is the position of the upper left corner of each array because
lines and columns are numbered starting from~$0$.

A \emph{modulo} is a pair of positive integers written $(p, q)$.  The
positions of any given array are partitioned into $pq$ residue classes
according to their respective residue classes modulo~$p$ and modulo~$q$:
Thus, the modulo $(1, 1)$ yields a single class containing all the
positions, and the modulo $(2, 2)$ partitions the positions in four
classes.

An array of size $n × n$ \emph{occurs} at position $(k,ℓ)$ in an array
if it is equal to the subarray  of size $n × n$ at position $(k,ℓ)$.

\begin{definition}[Perfect toroidal array]
  A toroidal array~$A$ is \emph{perfect} for window size $s × t$ and modulo
  $(p, q)$, abbreviated $(s, t, p, q)$-perfect, if each $s × t$ array has
  exactly one occurrence in~$A$ in each residue class modulo $(p, q)$.
\end{definition}

Then, in an $(s,t,1,1)$-perfect toroidal array each $s ×t$ array has exactly
one occurrence.  Figure~\ref{fig:perfect} gives an example of a
$(2,2,1,1)$-perfect toroidal array and two $(2,2,2,2)$-perfect toroidal
arrays.  In the leftmost, each $2 × 2$ array occurs exactly once.  In the
other two, each $2 × 2$ array occurs exactly four times, having exactly one
occurrence in each residue class modulo $(2, 2)$.

\begin{definition}[Aligned subarray]
Given an array $A$, a subarray of size $k × ℓ$ is \emph{aligned} if its position $(i,j)$  in~$A$ satisfies that $k$ divides $i$ and $ℓ$ 
divides~$j$.  
\end{definition}

\begin{definition}[Subdivision]
  If both $k$ and $\ell$ divides $n$,
a \emph{$k × ℓ$-subdivision} of an array of size $n × n$  yields  $k\ell$ aligned subarrays, each of  size $n/k × n/ℓ$.
\end{definition}
\begin{definition}[Nested perfect toroidal array]\label{npa}
  Assume a $b$-symbol alphabet.  A perfect toroidal array~$A$ is
  \emph{nested} for window size $s × t$ and modulo $(p,q)$, abbreviated
  \emph{nested $(s, t, p, q)$-perfect}, if for each $k=0,\ldots, s-1$, 
   each aligned subarray of the 
  $b^{kt/2} × b^{kt/2}$ subdivision of~$A$ is $(s-k,t,p,q)$-perfect.
\end{definition}

Notice that $(f,g,p,q)$-perfect implies   $(f',g',p,q)$-perfect 
for  $1\leq f'\leq f$ and   $1\leq g'\leq g$. The reverse implication may not be true.
Definition~\ref{npa} asks that the subdivisions yields  $(s-k,t,p,q)$-perfect subarrays instead of   
$(s-k,t-k,p,q)$-perfect, as it could be expected.
Our motivation is that we have a construction method that ensures the stronger property.

There are other natural options for the definition  of a nested perfect array.
For  non square arrays such definitions are not equivalent to each other, but they all coincide for square arrays.
In this work we are interested in  the square case.

Consider a  $(n, n, n, n)$-perfect toroidal array in the $b$-symbol
alphabet.
Its size  is $nb^{n^2/2} × nb^{n^2/2}$ for $n$~even.   
For $k=0,\ldots,n-1$, 
each part of its  $b^{kn/2} × b^{kn/2}$ subdivision
has size $nb^{n(n-k)/2} × nb^{n(n-k)/2}$.

For square arrays the  definition of  nested perfect toroidal arrays can be rephrased as follows.

\begin{definition}[Square nested perfect toroidal array]
  Assume a $b$-symbol alphabet.  An $(n,n,n,n)$-perfect toroidal array
  is \emph{nested} if, for $k=1,…, n$, each aligned subarray of size
  $nb^{nk/2} × nb^{nk/2}$ is $(k,n,n,n)$-perfect.
\end{definition}

For $b=2$ and   $n = 4$: an array~$A$ of size $1024 \times 1024$  is a nested $(4,4,4,4)$-perfect
  toroidal array if
  \begin{itemize}
  \item $A$ is a  $(4,4,4,4)$-perfect toroidal array;
  \item each $256 \times 256$ aligned subarray is a $(3,4,4,4)$-perfect 
    toroidal array;
  \item each $64 \times 64$ aligned subarray is a $(2,4,4,4)$-perfect 
    toroidal array;
  \item each $16 \times 16$ aligned subarray is a $(1,4,4,4)$-perfect 
    toroidal array.
  \end{itemize}

\begin{figure}[htb]
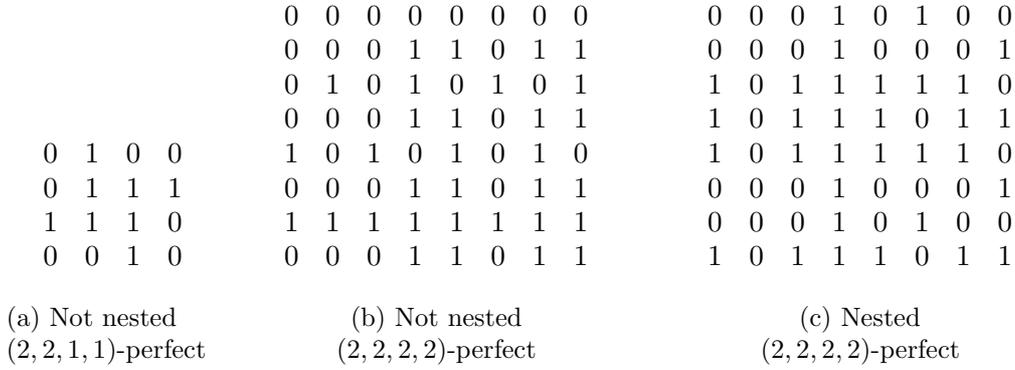

  \begin{center}
    \begin{subfigure}{0.18\textwidth}
      \begin{displaymath}
        \begin{array}{cccc}
        \\\\\\\\
          0 & 1 & 0 & 0 \\
          0 & 1 & 1 & 1 \\
          1 & 1 & 1 & 0 \\
          0 & 0 & 1 & 0 
        \end{array}
      \end{displaymath}
      \caption{Not nested \\ $(2,2,1,1)$-perfect}
      \label{fig:p1part1}
    \end{subfigure} 
    \begin{subfigure}{0.35\textwidth}
      \begin{displaymath}
        \begin{array}{cccccccc}
          0 & 0 & 0 & 0 & 0 & 0 & 0 & 0 \\
          0 & 0 & 0 & 1 & 1 & 0 & 1 & 1 \\
          0 & 1 & 0 & 1 & 0 & 1 & 0 & 1 \\
          0 & 0 & 0 & 1 & 1 & 0 & 1 & 1 \\
          1 & 0 & 1 & 0 & 1 & 0 & 1 & 0 \\
          0 & 0 & 0 & 1 & 1 & 0 & 1 & 1 \\
          1 & 1 & 1 & 1 & 1 & 1 & 1 & 1 \\
          0 & 0 & 0 &  1 & 1 & 0 &  1 & 1
        \end{array}
      \end{displaymath}
      \caption{Not nested \\ $(2,2,2,2)$-perfect}
      \label{fig:p1part1}
    \end{subfigure} 
    \begin{subfigure}{0.35\textwidth}
      \begin{displaymath}
        \begin{array}{cccccccc}
          0 & 0 & 0 & 1 & 0 & 1 & 0 & 0 \\
          0 & 0 & 0 & 1 & 0 & 0 & 0 & 1 \\
          1 & 0 & 1 & 1 & 1 & 1 & 1 & 0 \\
          1 & 0 & 1 & 1 & 1 & 0 & 1 & 1 \\
          1 & 0 & 1 & 1 & 1 & 1 & 1 & 0 \\
          0 & 0 & 0 & 1 & 0 & 0 & 0 & 1 \\
          0 & 0 & 0 & 1 & 0 & 1 & 0 & 0 \\
          1 & 0 & 1 & 1 & 1 & 0 & 1 & 1
        \end{array}
      \end{displaymath}
      \caption{Nested \\ $(2,2,2,2)$-perfect}
      \label{fig:p1part1}
    \end{subfigure} 
  \end{center}
\caption{Examples of perfect toroidal arrays }
\label{fig:perfect}
\end{figure}

\begin{figure}[htb]
\begin{displaymath}
  \begin{array}{c|c|c}
    \text{Array size for nested $(n,n,n,n)$-perfect, $n=2^d$ } & \text{Window size} & \text{Modulo} \\ \hline
    (b^{n × n/2} × n) × (b^{n × n/2} × n) & n × n &(n, n) \\
    (b^{n/2 × n/2} × n) × (b^{n/2 × n/2} × n) & (n/2)\times  n
                                                  & (n, n) \\
  (b^{n/2^{2} × n/2} × n) × (b^{n/2^{2} × n/2} × n) & (n/2^{2}) × n 
                                                  & (n, n) \\
    \vdots & \vdots & \vdots \\
    (b^{1 × n/2} × n) × (b^{(1 × n/2} × n) & 1 × n & (n, n)
  \end{array}
\end{displaymath}
    \caption{Subdivisions sizes and window sizes}
    \label{fig:my_label}
\end{figure}

The leftmost two toroidal arrays in Figure~\ref{fig:perfect} above are not  nested perfect:
The leftmost one is not nested 
because its left upper $2 × 2$ subarray has $2$ occurrences of the $1 × 2$
array $[0, 1]$ but no occurrence of the $1 × 2$ array $[0, 0]$.  The middle array in Figure~\ref{fig:perfect} is not a nested $(2,2,2,2)$-perfect array because its left upper $4 \times 4$ subarray contains more $0$s than $1$s. The rightmost toroidal array is a nested $(2,2,2,2)$-perfect toroidal array.
\medskip

The following  is the main result in the present note.  It states the
existence of nested perfect toroidal arrays of $0$s and $1$s when all parameters are equal to
the same power of~$2$.

\begin{theorem} \label{thm:existence}
  For every integer  $n \geqslant 2$ that is a power of $2$, there exist nested
  $(n,n,n,n)$-perfect  toroidal arrays of $0$ and $1$s.
\end{theorem}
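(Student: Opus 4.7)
I would prove Theorem~\ref{thm:existence} by an explicit construction, as the abstract advertises, based on the Pascal triangle matrix modulo~$2$. Write $n=2^d$ and $N = n \cdot 2^{n^2/2}$; split a row index as $i = rn+a$ with $0 \leqslant a < n$ and $0 \leqslant r < 2^{n^2/2}$, and likewise $j = sn+b$, so that $(a,b)$ is the residue class modulo $(n,n)$ and $(r,s)$ indexes the representative inside that class. With this notation, $(n,n,n,n)$-perfectness amounts to asking, for each fixed $(a,b)$, that the map sending $(r,s)$ to the $n\times n$ window of $A$ anchored at $(rn+a,\,sn+b)$ be a bijection from $\{0,\ldots,2^{n^2/2}-1\}^2$ onto $\FT^{n\times n}$. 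The nested refinements ask for the same kind of bijection, one scale down at a time, inside each aligned subarray.

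\textbf{Construction.} I would take $A[i,j]$ to be an $\FT$-bilinear (possibly affine) form in the binary expansions of $i$ and $j$, with coefficient matrix built from the Pascal matrix $P$ defined by $P_{i,j} = \binom{i}{j}\bmod 2$. Two facts about $P$ drive the argument. First, $P$ is lower triangular with ones on the diagonal, hence invertible over $\FT$, as are all its leading principal submatrices. Second, by Lucas' theorem the $2^e \times 2^e$ principal minors of $P$ are themselves Pascal matrices, so $P$ enjoys the Sierpi\'nski self-similarity that matches the dyadic $2^{kn/2} \times 2^{kn/2}$ subdivisions appearing in the definition of nesting. The freedom in an affine perturbation will account for the $2^{n^2+n-1}$ arrays counted in the abstract.

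\textbf{Reduction and main obstacle.} Both the global $(n,n,n,n)$-perfectness and every layer of the nesting reduce to a uniform statement: for any aligned subarray (equivalently, any choice of the high-order bits of $r$ and $s$), any window size $k \times n$ with $1 \leqslant k \leqslant n$, and any residue class $(a,b)$, the map that sends the remaining free bits of $(r,s)$ to the content of the $k \times n$ window at $(rn+a,\,sn+b)$ is $\FT$-linear, and, up to a permutation of rows and columns, its matrix is a submatrix of~$P$. The main obstacle is precisely this matrix step: one must show that \emph{every} such submatrix of $P$ has full rank over $\FT$. A generic invertible coefficient matrix would only yield top-level perfectness; the point of using Pascal is that its full-rank property is inherited by the right family of Sierpi\'nski blocks, so that all $n$ levels of the nesting hold simultaneously for a single $A$. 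I expect the bulk of the proof to be an induction on $d = \log_2 n$ establishing this rectangular-submatrix lemma, after which the theorem follows by a cardinality count: each of the $n^2$ residue classes contains exactly $2^{n^2}$ window positions, the injective linear map exhausts $\FT^{n \times n}$, and the analogous counts at the smaller scales give the nesting.
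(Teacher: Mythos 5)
Your overall strategy is the paper's: tile the $n2^{n^2/2}\times n2^{n^2/2}$ array with the matrices $M_dN_k$ (the bits of $k$ interleaved to give the row- and column-block indices), reduce perfectness at every level of the nesting to the invertibility of certain submatrices of the Pascal matrix modulo~$2$, and close with a cardinality count. However, two steps in your reduction would fail as written. The central one is the claim that, for a fixed residue class $(a,b)$, the map from the free bits of $(r,s)$ to the content of the window anchored at $(rn+a,\,sn+b)$ is $\FT$-linear with matrix a (permuted) submatrix of the Pascal matrix. When $(a,b)\neq(0,0)$ the window straddles up to four tiles, indexed by $(\ell,m)$, $(\ell,m+1)$, $(\ell+1,m)$, $(\ell+1,m+1)$, and the bits of $m+1$ are not a linear function of the bits of $m$ because of carries; so the window content is not a single linear image of the free bits. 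The paper's proof of Proposition~\ref{prop:main} replaces your one-shot linear system by a sequential determination: the piece $P_1$ of the window determines some unknown entries of $N_{\ell\vee m}$ through an invertible submatrix of $M_d$; the elementary fact that the last $r$ binary digits of $m$ determine the last $r$ binary digits of $m+1$ (and conversely) then transfers that information to the neighbouring tiles; only after this do $P_3$ and $P_4$ pin down the remaining unknowns. Some version of this carry argument is indispensable and is absent from your plan.

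The second gap is the submatrix lemma itself. ``Every such submatrix of $P$ has full rank'' is false if read literally: the Sierpi\'nski pattern contains large zero blocks, so most submatrices are singular, and the leading principal submatrices you invoke are not the ones that occur. The submatrices that actually arise have two specific shapes --- $k$ consecutive rows against the \emph{last} $k$ columns, and $k\times k$ blocks whose top-right entry lies on the upper border of the matrix --- and their invertibility is the nontrivial Bacher--Chapman result reproved here as Lemmas~\ref{lem:invert-right-sm} and~\ref{lem:invert-top-sm}. You correctly sense that a rectangular-submatrix lemma is the crux and that it should be proved by induction on $d$, but without identifying exactly which submatrices appear (which in turn requires the tile-straddling analysis above), the induction cannot be set up, so as it stands the proposal does not yet constitute a proof.
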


Our construction method does not yield just one instance, but many. The next result, Theorem~\ref{thm:constructions},
gives the exact number of different instances obtainable with our method.

\begin{theorem}\label{thm:constructions}
There is a construction method
that, 
for each integer $n$ that is a power of $2$,
yields  $2^{n^2+n-1}$ different nested $(n,n,n,n)$-perfect  toroidal arrays of $0$ and $1$s.
  \end{theorem}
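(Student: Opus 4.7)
The plan is to make the parameters of the construction method explicit, verify that each parameter choice produces a valid nested perfect toroidal array (using the mechanism of the Pascal triangle modulo~$2$ outlined in the abstract), and then show that distinct parameter choices yield distinct arrays. Since Theorem~\ref{thm:existence} already asserts existence, the real content here is the bijective correspondence between parameters and outputs.

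First, I would present the construction as a deterministic recipe $\Phi$ whose input is a pair $(S, v)$, where $S$ is an arbitrary $n \times n$ matrix over $\FT$ (a ``seed block'' that is placed, for instance, in the upper-left corner of the output) and $v \in \FT^{n-1}$ is an auxiliary tuple used either to choose the initial column/row or to break the affine symmetry coming from Pascal's triangle. The seed $S$ contributes $2^{n^2}$ choices and the vector $v$ contributes $2^{n-1}$, so the parameter space has exactly $2^{n^2+n-1}$ elements. I would then argue that every $\Phi(S,v)$ is a nested $(n,n,n,n)$-perfect toroidal array: the Pascal triangle mod~$2$ construction furnishes, for each $k=1,\dots,n$, the linear independence relations needed so that, within each aligned subarray of the relevant subdivision and within each residue class modulo $(n,n)$, every $k \times n$ window appears exactly once. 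This is the content imported from Theorem~\ref{thm:existence}, but applied with $(S,v)$ as free parameters rather than a single fixed choice.

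The second step is injectivity: different $(S,v)$ must give different arrays. I would show this by exhibiting an explicit inverse, that is, by locating in $\Phi(S,v)$ a set of positions whose entries recover $(S,v)$ uniquely. The natural candidates are the $n^2$ cells of the designated seed block (these directly read off $S$), together with $n-1$ further cells whose values are linear forms in $v$ that are, by the structure of the Pascal matrix mod~$2$, linearly independent. Combining linearity of the construction over $\FT$ with the invertibility of the Pascal triangle matrix modulo~$2$ (it is lower triangular with $1$s on the diagonal, hence unimodular) makes this recovery map well-defined and injective. Thus $\Phi$ is an injection from a set of size $2^{n^2+n-1}$ into the set of nested $(n,n,n,n)$-perfect toroidal arrays, which proves the claimed count.

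The step I expect to be the main obstacle is the injectivity analysis. Verifying validity in each subdivision level is largely a repackaging of Theorem~\ref{thm:existence}, and the arithmetic $2^{n^2} \cdot 2^{n-1} = 2^{n^2+n-1}$ is trivial, but pinpointing exactly which $n^2 + n - 1$ independent bits are encoded by the construction requires a careful bookkeeping of how the Pascal matrix propagates the seed $S$ and the vector $v$ through the recursive subdivisions. In particular, one must rule out coincidental collisions between choices of $v$ under different $S$, which is why I would isolate the $n-1$ auxiliary bits as coordinates on which the construction acts by an explicitly invertible $\FT$-linear map, independently of $S$.
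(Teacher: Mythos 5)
Your proposal correctly guesses the shape of the parameter space --- $2^{n^2}$ choices of an additive ``seed'' block times $2^{n-1}$ auxiliary bits --- and the overall strategy of an injection from parameters to arrays, which does match the paper: the paper's construction places $MN_k \oplus Z$ (instead of $M_dN_k$) at position $(\odd(k)n,\even(k)n)$, where $Z$ ranges over all $2^{n^2}$ matrices and $M$ ranges over the $2^{n-1}$ ``Pascal-like'' matrices $M_d^{m_0,\ldots,m_{n-1}}$ obtained by cyclically rotating the columns of $M_d$ subject to $m_{n-1}=0$ and $m_{i+1}\le m_i\le m_{i+1}+1$. Your injectivity sketch is also essentially the paper's: the block tiled at position $(0,0)$ is $MN_0\oplus Z = Z$, so $Z$ is read off directly, and then $MN_k = M'N_k$ for all $k$ forces $M=M'$ since the $N_k$ exhaust all matrices.

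The genuine gap is in the step you dismiss as ``largely a repackaging of Theorem~\ref{thm:existence}.'' You never say what the $n-1$ auxiliary bits actually parametrize, and the claim that perfectness of all $2^{n-1}$ variants follows from the mechanism already used for the single Pascal matrix is false. The proof of Proposition~\ref{prop:main} for $M_d$ relies on the fact that certain square submatrices of $M_d$ (consecutive rows against the last $k$ columns, and top rows against consecutive columns) are upper triangular with $1$s on the diagonal, hence invertible. Once the columns are rotated, that symmetry is destroyed, and the paper must prove two separate invertibility lemmas (Lemmas~\ref{lem:invert-right-sm} and~\ref{lem:invert-top-sm}), introduce the upper/lower borders and the bijection $\tau$, and run a genuinely more delicate two-phase reconstruction of $N_{\join{\ell}{m}}$ in Proposition~\ref{pro:aff2nes}. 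Additionally, the additive shift by $Z$ is not covered by Theorem~\ref{thm:existence} either; the paper needs Lemma~\ref{lem:xor-npn} to show that XOR-ing a tiled constant block preserves nested perfectness. So the heart of Theorem~\ref{thm:constructions} is precisely the validity of the $2^{n-1}$ structural variants, not the injectivity bookkeeping you flag as the main obstacle; without specifying the family of base matrices and proving these invertibility properties, the proposal does not establish that the $2^{n^2+n-1}$ candidate arrays are in fact nested perfect.
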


We do not know if there are more.

\section{Proof of Theorem~\ref{thm:existence}}

We assume that the alphabet is the two element field $\FT = \{0, 1\}$.  
We shall use matrices of elements in $\FT = \{0, 1\}$,  do matrix multiplication and matrix summation. 
The component-wise sum of elements in~$\FT$ is denoted by the symbol~$\oplus$.
 Matrices are named with the letters $M, N, P, Q$ with sub-indices and super-indices. When we depict a matrix we draw the surrounding black parenthesis outside.
The outcome of the construction in each case is a  toroidal array of $0s$ and $1$s  obtained by tiling with the above mentioned 
matrices. 
We name the arrays  with letters $A,B,C$ and when we draw them we do not put the surrounding  black parenthesis outside.

We start by  defining the following  family of matrices. 
\begin{definition} \label{def:Md}
We give an inductive definition of the  matrix~$M_d$ of elements in  $\FT$, of size $2^d × 2^d$, for each $d \ge 0$ by

\begin{displaymath}
  M_0 = (1)
  \quad\text{and}\quad
  M_{d+1} = \begin{pmatrix} M_d & M_d \\ 0  & M_d \end{pmatrix}.
\end{displaymath}
\end{definition}
Thus, the matrices $M_1$ and $M_2$ are
\begin{displaymath}
  M_1 = \begin{pmatrix}
    1 & 1 \\
    0 & 1
  \end{pmatrix} 
  \quad\text{and}\quad
  M_2 = \begin{pmatrix}
    1 & 1 & 1 & 1 \\
    0 & 1 & 0 & 1 \\
    0 & 0 & 1 & 1 \\
    0 & 0 & 0 & 1
  \end{pmatrix}.
\end{displaymath}
\medskip

For every $d$, the matrix~$M_d$ is upper triangular, that is
$(M_d)_{i,j} = 0$ for $0 ⩽ j < i  < 2^d$.  The following lemma states that
the upper part of the matrix~$M_d$ is the beginning of the Pascal triangle
modulo~$2$ also known as the Sierpi\'nski triangle.  The proof is a simple 
induction on~$d$ and can be found in~\cite[Lemma 3]{BecherCarton2019}.

The matrix~$M_d$ is almost the one
used by M. Levin in~\cite[Theorem 2]{Levin1999} because we have reversed the order of the columns.  
The Pascal triangle matrix has been previously used by H. Fauré~\cite{Faure1982}  for the construction of uniformly distributed sequences of real numbers.

\begin{lemma}[\protect{\cite[Lemma 3]{BecherCarton2019}}] \label{lem:pascal}
  For all integers $d,i,j$ such that $d \geqslant 0$ and $0 < i < 2^d$ and $0
  \leqslant j < 2^d-1$, $(M_d)_{i,j} = (M_d)_{i-1,j} \oplus (M_d)_{i,j+1}$.
\end{lemma}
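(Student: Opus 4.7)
The plan is to prove the lemma by induction on $d$ using the block-matrix definition of $M_d$. The base case $d=0$ is vacuous because $0 < i < 1$ has no integer solutions, and the first non-trivial instance $d = 1$ is the single check $(M_1)_{1,0} = 0 = 1 \oplus 1 = (M_1)_{0,0} \oplus (M_1)_{1,1}$, which is immediate from the displayed form of $M_1$.

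Before turning to the inductive step, I would record four easy boundary properties of $M_d$, each provable by its own short induction directly from Definition~\ref{def:Md}: the first row is $(1,1,\ldots,1)$, the first column is $(1,0,\ldots,0)^T$, the last row is $(0,\ldots,0,1)$, and the last column is $(1,1,\ldots,1)^T$. Together with the fact that $M_d$ is upper triangular (also an easy induction), these let me evaluate by hand every entry that sits on a boundary of the block decomposition of $M_{d+1}$.

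For the inductive step, fix $(i,j)$ with $0 < i < 2^{d+1}$ and $0 \leq j < 2^{d+1} - 1$, and set $N = 2^d$. Each of the three cells $(i-1, j)$, $(i, j)$, $(i, j+1)$ lies in one of the four $N \times N$ blocks of $M_{d+1}$, and a block boundary is crossed only when $i = N$ (row transition) or $j = N - 1$ (column transition). I would split into four cases. If neither transition happens, all three cells share a common block; when that block is a copy of $M_d$ the identity is the induction hypothesis with indices translated by $N$, and when it is the zero block the identity is trivially $0 = 0 \oplus 0$. If only $j = N-1$, the cell $(i, j+1)$ jumps to column~$0$ of the right-hand block, and the first- and last-column boundary facts reduce the identity to either $1 = 1 \oplus 0$ (top blocks) or $0 = 0 \oplus 0$ (bottom blocks). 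If only $i = N$, the cell $(i-1, j)$ sits at row $N-1$ of the top block, triangularity zeros out most terms, and the last-row and first-row facts finish the case. Finally, at the unique corner $(N, N-1)$ where both transitions happen, the three entries are $(M_d)_{N-1,N-1} = 1$, the zero-block entry $0$, and $(M_d)_{0,0} = 1$, yielding $0 = 1 \oplus 1$.

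The main obstacle is precisely this boundary bookkeeping: once the four auxiliary facts and triangularity are in hand, every case either translates cleanly to the induction hypothesis or reduces to a one-line arithmetic verification, so no new combinatorial structure is needed beyond the recursive block form.
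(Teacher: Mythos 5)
Your proof is correct and follows essentially the same route as the paper: induction on $d$ via the block decomposition of $M_{d+1}$, with the interior case handled by the induction hypothesis and the block-boundary cases ($i=2^d$ and/or $j=2^d-1$) resolved by the same four auxiliary facts about the first/last rows and columns of $M_d$. The only difference is presentational — you spell out the boundary subcases a bit more explicitly than the paper does.
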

\begin{proof}
To facilitate the review we include the proof already given in \cite[Lemma 3]{BecherCarton2019}.
  The proof is carried out by induction on~$d$.  For $d = 0$, the result is
  trivially true because there are no such $i$ and~$j$.  For $d = 1$, the
  result trivially holds.  Suppose that the result holds for $M_d$ and let
  $i,j$ be integers such that $0 < i < 2^{d+1}$ and $0 \le j < 2^{d+1}-1$.
  If $i \neq 2^d$ and $j \neq 2^d-1$, the three entries $(M_{d+1})_{i,j}$,
  $(M_{d+1})_{i-1,j}$ and $(M_{d+1})_{i,j+1}$ lie in the same quarter of
  the matrix~$M_{d+1}$ and the result follows from the inductive
  hypothesis.  Otherwise, the result is a consequences  the following facts. For
  each integer~$d \ge 1$, the entry $(M_d)_{i,j}$ is equal to~$1$ if either
  $i = 0$ or $j = 2^d-1$ (first row and last column) and it is equal to~$0$
  if $i = 2^d-1$ or $j = 0$ (last row and first column) and $(M_d)_{0,0} =
  (M_d)_{2^d-1,2^d-1} = 1$ (intersection of the two previous cases).  These
  facts are easily proved by induction on~$d$.
\end{proof}

Bacher and Chapman~\cite[Theorems 1 and 3]{Bacher-Chapman} proved that 
for every non negative integer $d$ and every
integer  $k$ such that $1\leq k\leq 2^d$, 
  every  $k\times k$  submatrix of $M_d$
  given by    $k$  consecutive rows and 
  and the last $k$ columns, or by 
  the  top  $k$  consecutive rows and 
  any consecutive  $k$ columns,  is invertible.
  In case  $k = 2^d$ this says that  the whole matrix $M_d$ is invertible.
A proof of this result in more general form appears in   Lemmas~\ref{lem:invert-right-sm} 
and~\ref{lem:invert-top-sm} in the next section.

Now we define an enumeration of all $n × n$ matrices over~$\FT$.  Suppose that the integer~$n$ is fixed.  Let $N_0,\ldots , N_{2^{n^2}-1}$ be the
enumeration of these matrices defined as follows.  Informally,  for $0 ⩽ k < 2^{n^2}$, the matrix
$N_k$  is filled by the digits of the binary expansion
of~$k$: the most $n$ significant binary digits are put in the first line,
the following $n$ digits are put in the second line and so on, until the
last line.

\begin{definition}[Matrices enumeration]\label{def:enum}
 Fix  a positive integer  integer~$n$.
 We define an enumeration $N_0,\ldots , N_{2^{n^2}-1}$ of all the  $n × n$ matrices over~$\FT$.   
 Let $k$ be a non-negative integer and 
 let  $a_{n^2-1} ⋯ a_0$ be its  binary expansion
  (the least significant digit is $a_0$).
  For each $i,j$ such that $i,j=0,\ldots, n-1$ the $(i,j)$-entry of the matrix~$N_k$ is $a_{n^2 - 1 - in - j}$.
The $(0,0)$-entry of $N_k$ is thus
 the first digit $a_{n^2-1}$ and the $(n-1,n-1)$-entry is the last digit~$a_0$.
\end{definition}

For example  the enumeration  
$N_0,…,N_{15}$ of the $16$ matrices 
over~$\FT$ of size $2 × 2$ is
\medskip

\begin{align*}
  & \left(
    \begin{smallmatrix}
      0 & 0 \\
      0 & 0 
    \end{smallmatrix}
  \right),
  \left(
    \begin{smallmatrix}
      0 & 0 \\
      0 & 1 
    \end{smallmatrix}
  \right),
  \left(
    \begin{smallmatrix}
      0 & 0 \\
      1 & 0 
    \end{smallmatrix}
  \right),
  \left(
    \begin{smallmatrix}
      0 & 0 \\
      1 & 1 
    \end{smallmatrix}
  \right),
  \left(
    \begin{smallmatrix}
      0 & 1 \\
      0 & 0 
    \end{smallmatrix}
  \right),
  \left(
    \begin{smallmatrix}
      0 & 1 \\
      0 & 1 
    \end{smallmatrix}
  \right),
  \left(
    \begin{smallmatrix}
      0 & 1 \\
      1 & 0 
    \end{smallmatrix}
  \right),
  \left(
    \begin{smallmatrix}
      0 & 1 \\
      1 & 1 
    \end{smallmatrix}
  \right), \\
  & \left(
    \begin{smallmatrix}
      1 & 0 \\
      0 & 0 
    \end{smallmatrix}
  \right),
  \left(
    \begin{smallmatrix}
      1 & 0 \\
      0 & 1 
    \end{smallmatrix}
  \right),
  \left(
    \begin{smallmatrix}
      1 & 0 \\
      1 & 0 
    \end{smallmatrix}
  \right),
  \left(
    \begin{smallmatrix}
      1 & 0 \\
      1 & 1 
    \end{smallmatrix}
  \right),
  \left(
    \begin{smallmatrix}
      1 & 1 \\
      0 & 0 
    \end{smallmatrix}
  \right),
  \left(
    \begin{smallmatrix}
      1 & 1 \\
      0 & 1 
    \end{smallmatrix}
  \right),
  \left(
    \begin{smallmatrix}
      1 & 1 \\
      1 & 0 
    \end{smallmatrix}
  \right),
  \left(
    \begin{smallmatrix}
      1 & 1 \\
      1 & 1 
    \end{smallmatrix}
  \right)
\end{align*}
\medskip

For any given non negative integer $k$ we consider its binary representation  as a sequence of bits $a_n \ldots a_0$. We denote with $\even(k)$  the integer whose binary representation is the subsequence 
$a_m\ldots a_2a_0$,  made of the bits   at even positions in the representation of~$k$, so  where $m=n$ in case $n$ is even, otherwise $m=n-1$. Similarly,  $\odd(k)$ is the integer  defined from the subsequence determined  by  the odd indexes.
\medskip

\begin{definition}[Pascal toroidal array]\label{def:Pascal}
Let  $d$ be a positive integer and let $n=2^d$.
 We define a toroidal array~$A_d$ of size $n2^{n^2/2} × n2^{n^2/2}$ over~$\FT$   by
tiling it with all the   $n × n$ matrices over~$\FT$.  Since there are
$2^{n^2}$ such matrices, the total number of placed cells is exactly the
size of~$A_d$.   
For each $k$ such that $0 ⩽ k <2^{n^2}$
 the
matrix $M_dN_k$ is placed in $A_d$
at position $(\odd(k)n, \even(k)n)$. 
\end{definition}

Since each  matrix~$M_d$   is upper
triangular with $1$s on the diagonal then 
it is invertible. 
Since $N_0,…, N_{n^2-1}$ is an enumeration of all $n × n$ matrices, then
$M_dN_0,…,M_dN_{n^2-1}$ is also an enumeration of all the $n × n$
matrices, but in a different order. 
This implies that each $n × n$
matrix is  used  exactly once to tile the array~$A_d$.

We illustrate the construction of $A_d$ for $d = 1$, $n=2$. 
The $2 × 2$ matrices $N_0, \ldots , N_{15}$ are listed above. 
The matrices $M_1 N_k$ for $k=0, \ldots , 15$ are placed as follows in the array $A_1$:

\begin{displaymath}
  \begin{array}{cccc}
    M_1N_0    & M_1N_1   & M_1N_4   & M_1N_5 \\
    M_1N_2    & M_1N_3   & M_1N_6   & M_1N_7 \\
    M_1N_8    & M_1N_9   & M_1N_{12} & M_1N_{13} \\
    M_1N_{10} & M_1N_{11} & M_1N_{14} & M_1N_{15} 
  \end{array}
\end{displaymath}
\medskip

Since each matrix $M_1N_k$ has size $2 × 2$, the array~$A_1$ has size $8 × 8$.  
The nested $(2,2,2,2)$-perfect array given in the right of
Figure~\ref{fig:perfect} is actually the array~$A_1$.

\begin{figure}[tb]
  \begin{center}
    \begin{tikzpicture}
      \draw (-0.5,4) -- (4.5,4);
      \draw (-0.5,2) -- (4.5,2);
      \draw (-0.5,0) -- (4.5,0);
      \draw (0,-0.5) -- (0,4.5);
      \draw (2,-0.5) -- (2,4.5);
      \draw (4,-0.5) -- (4,4.5);
      \draw (0.8,2.5) -- (2.8,2.5) -- (2.8,1) -- (0.8,1) -- cycle;
      \draw[dashed] (-0.3,2.5) -- (0.8,2.5);
      \draw[<->] (-0.2,2.5) -- (-0.2,4);
      \node at (-0.4,3.2) {$i$};
      \draw[dashed] (-0.3,0.5) -- (0.8,0.5);
      \draw[<->] (-0.2,0.5) -- (-0.2,2);
      \node at (-0.4,1.2) {$i$};
      \draw[dashed] (0.8,4.3)  -- (0.8,2.5);
      \draw[<->] (0,4.2) -- (0.8,4.2);
      \node at (0.4,4.4) {$j$};
      \draw[dashed] (2.8,4.3)  -- (2.8,2.5);
      \draw[<->] (2,4.2) -- (2.8,4.2);
      \node at (2.4,4.4) {$j$};
      \draw[dashed] (0.4,1) -- (0.8,1);
      \draw[<->] (0.6,1) -- (0.6,2.5);
      \node at (0.4,1.7) {$k$};
      \draw[<->] (0.6,0.5) -- (0.6,1);
      \node at (1.1,0.75) {$n{-}k$};
      \draw[dashed] (2.8,2.5) -- (3.2,2.5);
      \draw[<->] (3,2) -- (3,2.5);
      \node at (3.45,2.25) {$n{-}i$};
      \draw[<->] (0.8,3.3) -- (2.8,3.3);
      \node at (1.8,3.5) {$n$};
      
      \draw[<->] (0,-0.2) -- (2,-0.2);
      \node at (1,-0.5) {$n$};
      \draw[<->] (2,-0.2) -- (4,-0.2);
      \node at (3,-0.5) {$n$};
      \draw[<->] (4.2,0) -- (4.2,2);
      \node at (4.5,1) {$n$};
      \draw[<->] (4.2,2) -- (4.2,4);
      \node at (4.5,3) {$n$};
      \node at (1.4,2.2) {$P_1$};
      \node at (2.4,2.2) {$P_2$};
      \node at (1.4,1.5) {$P_3$};
      \node at (2.4,1.5) {$P_4$};
    \end{tikzpicture}
  \end{center}
  \caption{An occurrence of array $B$ in $A_d$,  $n=2^d$.}
  \label{fig:occurrence}
\end{figure}
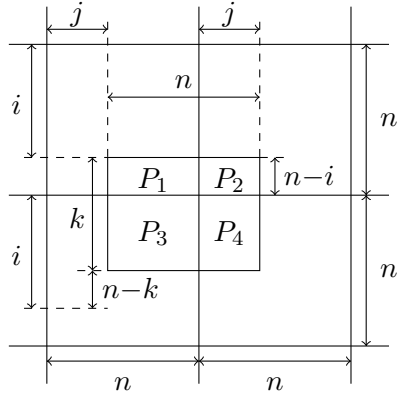

\begin{proposition}\label{prop:main}
Let  $d$ be a non-negative integer and let $n=2^d$.
  Let $k$ be an integer such that  $1 ⩽ k ⩽ n$. 
  Each $n2^{kn/2} × n2^{kn/2}$
  aligned subarray of~$A_d$ is a nested $(k,n,n,n)$-perfect toroidal
  array.
\end{proposition}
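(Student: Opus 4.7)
The plan is to prove the slightly more convenient statement that, for every $k'$ with $1 \le k' \le k$, every aligned subarray of $A_d$ of size $n\,2^{k'n/2} \times n\,2^{k'n/2}$ is $(k',n,n,n)$-perfect. This implies the nested conclusion of Proposition~\ref{prop:main} because any further subdivision of an aligned subarray of size $n\,2^{kn/2}$ is again an aligned subarray of $A_d$ at one of the smaller sizes covered by the claim.

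For the reduced claim, I would fix $k'$, an aligned subarray $S$, and a residue class $(i,j)$ modulo $(n,n)$, and show that the map sending a position of $S$ in class $(i,j)$ to the $k' \times n$ pattern occurring there is a bijection. A direct count gives $(2^{k'n/2})^2 = 2^{k'n}$ positions in each class and the same number $2^{k'n}$ of candidate patterns, so it is enough to prove injectivity. A position in class $(i,j)$ has the form $(i+np,\,j+nq)$ with $(p,q)$ ranging over the $2^{k'n/2} \times 2^{k'n/2}$ block positions of $S$; via the interleaving $\odd(s(p,q))=p$ and $\even(s(p,q))=q$ together with Definition~\ref{def:enum}, the aligned-subarray constraint translates into fixing the top $n-k'$ rows of $N_{s(p,q)}$ while the bottom $k'$ rows vary freely over $\FT^{k' \times n}$. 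The $k' \times n$ pattern $B$ at the chosen position is assembled, as in Figure~\ref{fig:occurrence}, from submatrices of at most four adjacent block matrices $M_d N_{s(p,q)}$, $M_d N_{s(p,q+1)}$, $M_d N_{s(p+1,q)}$, $M_d N_{s(p+1,q+1)}$. The core computation is to write $B$ explicitly in terms of those four blocks, and then use Lemma~\ref{lem:pascal} to rewrite the pieces that straddle a block boundary so that the contribution of the free bottom rows of the $N_s$'s separates from the contribution of their fixed top rows. After this rewriting, the dependence of $B$ on the free rows factors through multiplication by a $k' \times k'$ submatrix of $M_d$ of one of the forms that the Bacher--Chapman results (Lemmas~\ref{lem:invert-right-sm} and~\ref{lem:invert-top-sm}) guarantee to be invertible; invertibility then shows that $B$, combined with the offset $(i,j)$ and the data fixed by $S$, determines the free bottom rows of the relevant $N_s$'s, and therefore $(p,q)$.

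The main obstacle is the case analysis on $(i,j)$ relative to the block grid: the four cases $i \in \{0,>0\}$ crossed with $j \in \{0,>0\}$ yield qualitatively different assemblies of $B$ from the surrounding blocks, and the invertible $k' \times k'$ submatrix of $M_d$ that emerges is a different one in each case. The role of Lemma~\ref{lem:pascal} is precisely to realign contributions across block boundaries so that exactly the submatrices covered by Lemmas~\ref{lem:invert-right-sm} and~\ref{lem:invert-top-sm} are the ones that appear; keeping this bookkeeping consistent across the cases, and verifying that the invertible submatrix always acts on the free bottom $k'$ rows rather than on the frozen top part, is where the real work lies.
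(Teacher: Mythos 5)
Your outline reproduces the architecture of the paper's proof of Proposition~\ref{prop:main}: the same reduction of nestedness to the single claim for every $k'\le k$ (aligned subarrays of aligned subarrays are aligned subarrays of $A_d$), the same tiling of the subarray by blocks $M_dN_{\join{\ell}{m}}$ whose top $n-k'$ rows are frozen by alignment, the same four-block decomposition $P_1,\dots,P_4$ of an occurrence, and the same appeal to invertible Pascal submatrices (triangular ones for $P_1,P_2$; Lemma~\ref{lem:invert-top-sm} for $P_3,P_4$). Your counting step, which reduces ``exactly one occurrence per class'' to injectivity of the position-to-pattern map, is a harmless repackaging of the paper's ``the linear system has a unique solution''.

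One ingredient of the paper's argument is missing from your sketch, and without it the straddling case does not close. The four overlapping tiles are not independent unknowns: naively the pattern $P$ supplies only $k'n$ bits while the four matrices $N_{\join{\ell}{m}}$, $N_{\join{\ell}{(m+1)}}$, $N_{\join{(\ell+1)}{m}}$, $N_{\join{(\ell+1)}{(m+1)}}$ carry $4k'n$ free bits, so no invertible square submatrix of $M_d$ can by itself ``separate the contribution of the free rows''. The paper resolves this with the observation that the last $r$ binary digits of an integer $s$ determine (and are determined by) the last $r$ binary digits of $s+1$: once $P_1$ and $P_2$ have determined the last $i$ rows of $N_{\join{\ell}{m}}$ and $N_{\join{\ell}{(m+1)}}$, the corresponding rows of the two remaining neighbours are forced as well, and only then do they become the ``known'' constants that make the $(i+k-n)\times(i+k-n)$ systems for $P_3$ and $P_4$ square and solvable via Lemma~\ref{lem:invert-top-sm}. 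You would need to make this inter-tile correlation explicit; note also that what realigns contributions across block boundaries is the upper triangularity of $M_d$ together with this correlation, not Lemma~\ref{lem:pascal}, which enters only through the proofs of the invertibility lemmas.
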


  \begin{figure}[t!]
    \begin{subfigure}{0.8\textwidth}
      \begin{tikzpicture}[scale=1.2]
        \begin{scope}
          \draw[dashed] (0,0) -- (0,2) -- (2,2) -- (2,0) -- cycle;
          \draw (0.8,0.5) -- (2,0.5) -- (2,0) -- (0.8,0) -- cycle;
          \draw[dashed] (-0.3,2) -- (0,2);
          \draw[dashed] (-0.3,0.5) -- (0.8,0.5);
          \draw[<->] (-0.2,0.5) -- (-0.2,2);
          \node at (-0.4,1.2) {$i$};
          \draw[dashed] (0,2)  -- (0,2.3);
          \draw[dashed] (0.8,2.3)  -- (0.8,0.5);
          \draw[<->] (0,2.2) -- (0.8,2.2);
          \node at (0.4,2.4) {$j$};
          \node at (1.4,0.2) {$P_1$};
        \end{scope}
        \node at (3,1) {$=$};
        \begin{scope}[shift={(4,0)}]
          \draw (0,0) -- (0,2) -- (2,2) -- (2,0) -- cycle;
          \draw[fill=gray!20] (0,0.5) -- (2,0.5) -- (2,0) -- (0,0) -- cycle;
          \draw[dashed] (-0.3,2) -- (0,2);
          \draw[dashed] (-0.3,0.5) -- (0,0.5);
          \draw[<->] (-0.2,0.5) -- (-0.2,2);
          \node at (-0.4,1.2) {$i$};
          \node at (1,0.95) {$M_d$};
        \end{scope}
        \node at (6.5,1) {$×$};
        \begin{scope}[shift={(7,0)}]
          \draw (0,0) -- (0,2) -- (2,2) -- (2,0) -- cycle;
          \draw[fill=gray!20] (0.8,0) -- (0.8,2) -- (2,2) -- (2,0) -- cycle;
          \draw[dashed] (0,2)  -- (0,2.3);
          \draw[dashed] (0.8,2.3)  -- (0.8,0.5);
          \draw[<->] (0,2.2) -- (0.8,2.2);
          \node at (0.4,2.4) {$j$};
          \node at (1.4,0.95) {$N_{\join{ℓ}{n}}$};
        \end{scope}
      \end{tikzpicture}
      \caption{Parts of $M_d$ and $N_{\join{ℓ}{n}}$ used to compute $P_1$}
      \label{fig:p1part1}
    \end{subfigure} 
    \begin{subfigure}{0.8\textwidth}
      \begin{tikzpicture}[scale=1.2]
        \begin{scope}
          \draw[dashed] (0,0) -- (0,2) -- (2,2) -- (2,0) -- cycle;
          \draw (0.8,0.5) -- (2,0.5) -- (2,0) -- (0.8,0) -- cycle;
          \draw[dashed] (-0.3,2) -- (0,2);
          \draw[dashed] (-0.3,0.5) -- (0.8,0.5);
          \draw[<->] (-0.2,0.5) -- (-0.2,2);
          \node at (-0.4,1.2) {$i$};
          \draw[dashed] (0,2)  -- (0,2.3);
          \draw[dashed] (0.8,2.3)  -- (0.8,0.5);
          \draw[<->] (0,2.2) -- (0.8,2.2);
          \node at (0.4,2.4) {$j$};
          \node at (1.4,0.2) {$P_1$};
        \end{scope}
        \node at (3,1) {$=$};
        \begin{scope}[shift={(4,0)}]
          \draw (0,0) -- (0,2) -- (2,2) -- (2,0) -- cycle;
          \draw (2,0) -- (0,2);
          \draw[fill=gray!20] (2,0) -- (1.5,0.5) -- (2,0.5) -- cycle;
          \draw[dashed] (-0.3,2) -- (0,2);
          \draw[dashed] (-0.3,0.5) -- (1.5,0.5);
          \draw[<->] (-0.2,0.5) -- (-0.2,2);
          \node at (-0.4,1.2) {$i$};
          \node at (1.3,1.3) {$M_d$};
        \end{scope}
        \node at (6.5,1) {$×$};
        \begin{scope}[shift={(7,0)}]
          \draw (0,0) -- (0,2) -- (2,2) -- (2,0) -- cycle;
          \draw[fill=gray!20] (0.8,0) -- (0.8,0.5) -- (2,0.5) -- (2,0) -- cycle;
          \draw[fill=gray] (0,1.5) -- (0,2) -- (2,2) -- (2,1.5) -- cycle;
          \node[white] at (1,1.75) {FIXED};
          \draw[dashed] (2,2) -- (2.3,2);
          \draw[dashed] (2,1.5) -- (2.3,1.5);
          \draw[<->] (2.2,2) -- (2.2,1.5);
          \node at (2.65,1.75) {$n{-}k$};
          \node at (1,0.95) {$N_{\join{ℓ}{n}}$};
        \end{scope}
      \end{tikzpicture}
      \caption{Taking into account that $M_d$ is upper triangular}
      \label{fig:p1part2}
    \end{subfigure}
    \caption{Proof of Proposition~\ref{prop:main},  using $P_1$}
    \label{fig:p1}
    \begin{subfigure}{0.8\textwidth}
      \begin{tikzpicture}[scale=1.2]
        \begin{scope}
          \draw[dashed] (0,0) -- (0,2) -- (2,2) -- (2,0) -- cycle;
          \draw (0,0) -- (0,0.5) -- (0.8,0.5) -- (0.8,0) -- cycle;
          \draw[dashed] (-0.3,2) -- (0,2);
          \draw[dashed] (-0.3,0.5) -- (0.8,0.5);
          \draw[<->] (-0.2,0.5) -- (-0.2,2);
          \node at (-0.4,1.2) {$i$};
          \draw[dashed] (0,2)  -- (0,2.3);
          \draw[dashed] (0.8,2.3)  -- (0.8,0.5);
          \draw[<->] (0,2.2) -- (0.8,2.2);
          \node at (0.4,2.4) {$j$};
          \node at (0.4,0.2) {$P_2$};
        \end{scope}
        \node at (3,1) {$=$};
        \begin{scope}[shift={(4,0)}]
          \draw (0,0) -- (0,2) -- (2,2) -- (2,0) -- cycle;
          \draw[fill=gray!20] (0,0.5) -- (2,0.5) -- (2,0) -- (0,0) -- cycle;
          \draw[dashed] (-0.3,2) -- (0,2);
          \draw[dashed] (-0.3,0.5) -- (0,0.5);
          \draw[<->] (-0.2,0.5) -- (-0.2,2);
          \node at (-0.4,1.2) {$i$};
          \node at (1,0.95) {$M_d$};
        \end{scope}
        \node at (6.5,1) {$×$};
        \begin{scope}[shift={(7,0)}]
          \draw (0,0) -- (0,2) -- (2,2) -- (2,0) -- cycle;
          \draw[fill=gray!20] (0,0) -- (0,2) -- (0.8,2) -- (0.8,0) -- cycle;
          \draw[dashed] (0,2)  -- (0,2.3);
          \draw[dashed] (0.8,2.3)  -- (0.8,0.5);
          \draw[<->] (0,2.2) -- (0.8,2.2);
          \node at (0.4,2.4) {$j$};
          \node at (1,0.95) {$N_{\join{ℓ}{(n{+}1)}}$};
        \end{scope}
      \end{tikzpicture}
      \caption{Parts of $M_d$ and $N_{\join{ℓ}{n}}$ used to compute $P_2$}
      \label{fig:p2part1}
    \end{subfigure} 
    \begin{subfigure}{0.8\textwidth}
      \begin{tikzpicture}[scale=1.2]
        \begin{scope}
          \draw[dashed] (0,0) -- (0,2) -- (2,2) -- (2,0) -- cycle;
          \draw (0,0) -- (0,0.5) -- (0.8,0.5) -- (0.8,0) -- cycle;
          \draw[dashed] (-0.3,2) -- (0,2);
          \draw[dashed] (-0.3,0.5) -- (0.8,0.5);
          \draw[<->] (-0.2,0.5) -- (-0.2,2);
          \node at (-0.4,1.2) {$i$};
          \draw[dashed] (0,2)  -- (0,2.3);
          \draw[dashed] (0.8,2.3)  -- (0.8,0.5);
          \draw[<->] (0,2.2) -- (0.8,2.2);
          \node at (0.4,2.4) {$j$};
          \node at (0.4,0.2) {$P_2$};
        \end{scope}
        \node at (3,1) {$=$};
        \begin{scope}[shift={(4,0)}]
          \draw (0,0) -- (0,2) -- (2,2) -- (2,0) -- cycle;
          \draw (2,0) -- (0,2);
          \draw[fill=gray!20] (2,0) -- (1.5,0.5) -- (2,0.5) -- cycle;
          \draw[dashed] (-0.3,2) -- (0,2);
          \draw[dashed] (-0.3,0.5) -- (1.5,0.5);
          \draw[<->] (-0.2,0.5) -- (-0.2,2);
          \node at (-0.4,1.2) {$i$};
          \node at (1.3,1.3) {$M_d$};
        \end{scope}
        \node at (6.5,1) {$×$};
        \begin{scope}[shift={(7,0)}]
          \draw (0,0) -- (0,2) -- (2,2) -- (2,0) -- cycle;
          \draw[fill=gray!20] (0,0) -- (0,0.5) -- (0.8,0.5) -- (0.8,0) -- cycle;
          \draw[fill=gray] (0,1.5) -- (0,2) -- (2,2) -- (2,1.5) -- cycle;
          \node[white] at (1,1.75) {FIXED};
          \draw[dashed] (2,2) -- (2.3,2);
          \draw[dashed] (2,1.5) -- (2.3,1.5);
          \draw[<->] (2.2,2) -- (2.2,1.5);
          \node at (2.65,1.75) {$n{-}k$};
          \node at (1,0.95) {$N_{\join{ℓ}{(n{+}1)}}$};
        \end{scope}
      \end{tikzpicture}
      \caption{Taking into account that $M_d$ is upper triangular}
      \label{fig:p2part2}
    \end{subfigure}
    \caption{Proof of Proposition~\ref{prop:main},  using $P_2$}
    \label{fig:p2}
  \end{figure}

  \begin{figure}[t!]
    \begin{subfigure}{0.95\textwidth}
      \begin{tikzpicture}[scale=1.2]
        \begin{scope}
          \draw[dashed] (0,0) -- (0,2) -- (2,2) -- (2,0) -- cycle;
          \draw (0.8,1) -- (0.8,2) -- (2,2) -- (2,1) -- cycle;
          \draw[dashed] (-0.3,2) -- (0,2);
          \draw[dashed] (-0.3,1)  -- (0.8,1);
          \draw[<->] (-0.2,1) -- (-0.2,2);
          \node at (-0.8,1.5) {$i{+}k{-}n$};
          \draw[dashed] (0,2)  -- (0,2.3);
          \draw[dashed] (0.8,2)  -- (0.8,2.3);
          \draw[<->] (0,2.2) -- (0.8,2.2);
          \node at (0.4,2.4) {$j$};
          \node at (1.4,1.5) {$P_3$};
        \end{scope}
        \node at (3,1) {$=$};
        \begin{scope}[shift={(4,0)}]
          \draw (0,0) -- (0,2) -- (2,2) -- (2,0) -- cycle;
          \draw[fill=gray!20] (0,1) -- (0,2) -- (2,2) -- (2,1) -- cycle;
          \draw[dashed] (-0.3,2) -- (0,2);
          \draw[dashed] (-0.3,1)  -- (0,1);
          \draw[<->] (-0.2,1) -- (-0.2,2);
          \node at (-0.8,1.5) {$i{+}k{-}n$};
          \node at (1,0.65) {$M_d$};
        \end{scope}
        \node at (6.5,1) {$×$};
        \begin{scope}[shift={(7,0)}]
          \draw (0,0) -- (0,2) -- (2,2) -- (2,0) -- cycle;
          \draw[fill=gray!20] (0.8,0) -- (0.8,2) -- (2,2) -- (2,0) -- cycle;
          \draw[dashed] (0,2)  -- (0,2.3);
          \draw[dashed] (0.8,2.3)  -- (0.8,0.5);
          \draw[<->] (0,2.2) -- (0.8,2.2);
          \node at (0.4,2.4) {$j$};
          \node at (1,0.95) {$N_{\join{(ℓ{+}1)}{n}}$};
        \end{scope}
      \end{tikzpicture}
      \caption{Parts of $M_d$ and $N_{\join{ℓ}{n}}$ used to compute $P_3$}
      \label{fig:p3part1}
    \end{subfigure} 
    \begin{subfigure}{0.95\textwidth}
      \begin{tikzpicture}[scale=1.2]
        \begin{scope}
          \draw[dashed] (0,0) -- (0,2) -- (2,2) -- (2,0) -- cycle;
          \draw (0.8,1) -- (0.8,2) -- (2,2) -- (2,1) -- cycle;
          \draw[dashed] (-0.3,2) -- (0,2);
          \draw[dashed] (-0.3,1)  -- (0.8,1);
          \draw[<->] (-0.2,1) -- (-0.2,2);
          \node at (-0.8,1.5) {$i{+}k{-}n$};
          \draw[dashed] (0,2)  -- (0,2.3);
          \draw[dashed] (0.8,2)  -- (0.8,2.3);
          \draw[<->] (0,2.2) -- (0.8,2.2);
          \node at (0.4,2.4) {$j$};
          \node at (1.4,1.5) {$P_3$};
        \end{scope}
        \node at (3,1) {$=$};
        \begin{scope}[shift={(4,0)}]
          \draw (0,0) -- (0,2) -- (2,2) -- (2,0) -- cycle;
          \draw (2,0) -- (0,2);
          \draw[fill=gray] (0,2) -- (0.5,2) -- (0.5,1.5) -- cycle;
          \draw[fill=gray!20] (0.5,2) -- (1.5,2) -- (1.5,1) -- (1,1) --
                              (0.5,1.5) -- cycle;
          \draw[fill=gray] (1.5,1) -- (1.5,2) -- (2,2) -- (2,1) -- cycle;
          \draw[dashed] (0,2)  -- (0,2.3);
          \draw[dashed] (0.5,2)  -- (0.5,2.3);
          \draw[<->] (0,2.2) -- (0.5,2.2);
          \node at (0.25,2.4) {$n{-}k$};
          \draw[dashed] (1.5,2)  -- (1.5,2.3);
          \draw[dashed] (2,2)  -- (2,2.3);
          \draw[<->] (1.5,2.2) -- (2,2.2);
          \node at (1.75,2.4) {$i$};
          \draw[dashed] (-0.3,2) -- (0,2);
          \draw[dashed] (-0.3,1)  -- (0.8,1);
          \draw[<->] (-0.2,1) -- (-0.2,2);
          \node at (-0.8,1.5) {$i{+}k{-}n$};
          \node at (1,1.5) {$M_d$};
        \end{scope}
        \node at (6.5,1) {$×$};
        \begin{scope}[shift={(7,0)}]
          \draw (0,0) -- (0,2) -- (2,2) -- (2,0) -- cycle;
          \draw[fill=gray!20] (0.8,0.5) -- (0.8,1.5) -- (2,1.5) -- (2,0.5) -- cycle;
          \draw[fill=gray] (0,1.5) -- (0,2) -- (2,2) -- (2,1.5) -- cycle;
          \node[white] at (1,1.75) {FIXED};
          \draw[fill=gray] (0,0) -- (0,0.5) -- (2,0.5) -- (2,0) -- cycle;
          \node[white] at (1,0.25) {KNOWN};
          \draw[dashed] (2,0.5) -- (2.3,0.5);
          \draw[dashed] (2,0) -- (2.3,0);
          \draw[<->] (2.2,0) -- (2.2,0.5);
          \node at (2.4,0.25) {$i$};
          \draw[dashed] (2,2) -- (2.3,2);
          \draw[dashed] (2,1.5) -- (2.3,1.5);
          \draw[<->] (2.2,2) -- (2.2,1.5);
          \node at (2.65,1.75) {$n{-}k$};
          \node at (1,0.95) {$N_{\join{(ℓ{+}1)}{n}}$};
        \end{scope}
      \end{tikzpicture}
      \caption{Taking into account that $M_d$ is upper triangular}
      \label{fig:p3part2}
    \end{subfigure}
    \caption{Proof of Proposition~\ref{prop:main},  using $P_3$}
    \label{fig:p3}
   \begin{subfigure}{0.95\textwidth}
      \begin{tikzpicture}[scale=1.2]
        \begin{scope}
          \draw[dashed] (0,0) -- (0,2) -- (2,2) -- (2,0) -- cycle;
          \draw (0,1) -- (0,2) -- (0.8,2) -- (0.8,1) -- cycle;
          \draw[dashed] (-0.3,2) -- (0,2);
          \draw[dashed] (-0.3,1)  -- (0,1);
          \draw[<->] (-0.2,1) -- (-0.2,2);
          \node at (-0.8,1.5) {$i{+}k{-}n$};
          \draw[dashed] (0,2)  -- (0,2.3);
          \draw[dashed] (0.8,2)  -- (0.8,2.3);
          \draw[<->] (0,2.2) -- (0.8,2.2);
          \node at (0.4,2.4) {$j$};
          \node at (0.4,1.5) {$P_4$};
        \end{scope}
        \node at (3,1) {$=$};
        \begin{scope}[shift={(4,0)}]
          \draw (0,0) -- (0,2) -- (2,2) -- (2,0) -- cycle;
          \draw[fill=gray!20] (0,1) -- (0,2) -- (2,2) -- (2,1) -- cycle;
          \draw[dashed] (-0.3,2) -- (0,2);
          \draw[dashed] (-0.3,1)  -- (0,1);
          \draw[<->] (-0.2,1) -- (-0.2,2);
          \node at (-0.8,1.5) {$i{+}k{-}n$};
          \node at (1,0.65) {$M_d$};
        \end{scope}
        \node at (6.5,1) {$×$};
        \begin{scope}[shift={(7,0)}]
          \draw (0,0) -- (0,2) -- (2,2) -- (2,0) -- cycle;
          \draw[fill=gray!20] (0,0) -- (0,2) -- (0.8,2) -- (0.8,0) -- cycle;
          \draw[dashed] (0,2)  -- (0,2.3);
          \draw[dashed] (0.8,2.3)  -- (0.8,0.5);
          \draw[<->] (0,2.2) -- (0.8,2.2);
          \node at (0.4,2.4) {$j$};
          \node at (1,0.95) {$N_{\join{(ℓ{+}1)}{n}}$};
        \end{scope}
      \end{tikzpicture}
      \caption{Parts of $M_d$ and $N_{\join{ℓ}{n}}$ used to compute $P_4$}
      \label{fig:p4part1}
    \end{subfigure} 
    \begin{subfigure}{0.95\textwidth}
      \begin{tikzpicture}[scale=1.2]
        \begin{scope}
          \draw[dashed] (0,0) -- (0,2) -- (2,2) -- (2,0) -- cycle;
          \draw (0,1) -- (0,2) -- (0.8,2) -- (0.8,1) -- cycle;
          \draw[dashed] (-0.3,2) -- (0,2);
          \draw[dashed] (-0.3,1)  -- (0,1);
          \draw[<->] (-0.2,1) -- (-0.2,2);
          \node at (-0.8,1.5) {$i{+}k{-}n$};
          \draw[dashed] (0,2)  -- (0,2.3);
          \draw[dashed] (0.8,2)  -- (0.8,2.3);
          \draw[<->] (0,2.2) -- (0.8,2.2);
          \node at (0.4,2.4) {$j$};
          \node at (0.4,1.5) {$P_4$};
        \end{scope}
        \node at (3,1) {$=$};
        \begin{scope}[shift={(4,0)}]
          \draw (0,0) -- (0,2) -- (2,2) -- (2,0) -- cycle;
          \draw (2,0) -- (0,2);
          \draw[fill=gray] (0,2) -- (0.5,2) -- (0.5,1.5) -- cycle;
          \draw[fill=gray!20] (0.5,2) -- (1.5,2) -- (1.5,1) -- (1,1) --
                              (0.5,1.5) -- cycle;
          \draw[fill=gray] (1.5,1) -- (1.5,2) -- (2,2) -- (2,1) -- cycle;
          \draw[dashed] (0,2)  -- (0,2.3);
          \draw[dashed] (0.5,2)  -- (0.5,2.3);
          \draw[<->] (0,2.2) -- (0.5,2.2);
          \node at (0.25,2.4) {$n{-}k$};
          \draw[dashed] (1.5,2)  -- (1.5,2.3);
          \draw[dashed] (2,2)  -- (2,2.3);
          \draw[<->] (1.5,2.2) -- (2,2.2);
          \node at (1.75,2.4) {$i$};
          \draw[dashed] (-0.3,2) -- (0,2);
          \draw[dashed] (-0.3,1)  -- (0.8,1);
          \draw[<->] (-0.2,1) -- (-0.2,2);
          \node at (-0.8,1.5) {$i{+}k{-}n$};
          \node at (1,1.5) {$M_d$};
        \end{scope}
        \node at (6.5,1) {$×$};
        \begin{scope}[shift={(7,0)}]
          \draw (0,0) -- (0,2) -- (2,2) -- (2,0) -- cycle;
          \draw[fill=gray!20] (0,0.5) -- (0,1.5) -- (0.8,1.5) -- (0.8,0.5) -- cycle;
          \draw[fill=gray] (0,1.5) -- (0,2) -- (2,2) -- (2,1.5) -- cycle;
          \node[white] at (1,1.75) {FIXED};
          \draw[fill=gray] (0,0) -- (0,0.5) -- (2,0.5) -- (2,0) -- cycle;
          \node[white] at (1,0.25) {KNOWN};
          \draw[dashed] (2,0.5) -- (2.3,0.5);
          \draw[dashed] (2,0) -- (2.3,0);
          \draw[<->] (2.2,0) -- (2.2,0.5);
          \node at (2.4,0.25) {$i$};
          \draw[dashed] (2,2) -- (2.3,2);
          \draw[dashed] (2,1.5) -- (2.3,1.5);
          \draw[<->] (2.2,2) -- (2.2,1.5);
          \node at (2.65,1.75) {$n{-}k$};
          \node at (1,0.95) {$N_{\join{(ℓ{+}1)}{(n{+}1)}}$};
        \end{scope}
      \end{tikzpicture}
      \caption{Taking into account that $M_d$ is upper triangular}
      \label{fig:p4part2}
    \end{subfigure}
    \caption{Proof of Proposition~\ref{prop:main}, using $P_4$}
    \label{fig:p4}
  \end{figure}
   
\begin{proof}
  Suppose that $k$ is fixed,  $1 ⩽ k ⩽ n$, and let $B$ be an aligned subarray
  of~$A_d$ of sizes $n2^{kn/2} × n2^{kn/2}$.  Since $B$ is aligned, the
  coordinates of its upper left corner are of the form $pn2^{kn/2}$ and
  $qn2^{kn/2}$ for two integers $p$ and~$q$ such that $0 ⩽ p,q <
  2^{(n-k)n/2}$.  This means that the subarray~$B$ is tiled by the
  matrices $M_d N_{\join{ℓ}{m}}$ for integers $ℓ$ and~$m$ satisfying
  \[p2^{kn/2} ⩽ ℓ < (p+1)2^{kn/2} \text{ and }
 q2^{kn/2}⩽ m < (q+1)2^{kn/2}.\]  
  Note
  that the factor~$n$ disappeared since it accounts for the sizes of each
  of the matrices $M_d N_{\join{ℓ}{m}}$.  The binary expansions of all
  integers~$ℓ$ satisfying 
$p2^{kn/2} ⩽ ℓ < (p+1)2^{kn/2}$
start with the
  same $(n-k)n/2$ binary digits and the same hold for all integers~$m$
  satisfying 
  $q2^{kn/2} ⩽ m < (q+1)2^{kn/2}.$  
  This implies that the binary
  expansion of $\join{ℓ}{m}$ starts with the same $(n-k)n$ binary digits.
  Since the first binary digits of $\join{ℓ}{m}$ are put in the first rows
  of the matrix~$N_{\join{ℓ}{m}}$ which have length~$n$, all matrices
  $N_{\join{ℓ}{m}}$ for $ℓ$ and $m$ in their respective intervals have the
  same first $n - k$ rows.
  
  Let $(i,j)$ be a pair of integers such that $0 ⩽ i,j < n$ and let $P$ be
  an array of size $k × n$.  We claim that $P$ has exactly one occurrence
  in~$B$ which is congruent to
  $(i, j)$ modulo~$(n, n)$.
  To prove it, we show that $P$ has a single such occurrence exactly when a certain
  system of linear equations has a solution. 
 Furthermore, this solution of
  the system provides the matrix~$N_{\join{ℓ}{m}}$ and thus the integers $ℓ$
  and~$m$ which, in turn, give the position of the occurrence of~$P$ in the  subarray~$B$.  
\medskip

An occurrence~$P$ can overlap at most four matrices tiling
  the subarray~$B$.  Suppose that the upper left corner of the occurrence
  of~$P$ lies in some matrix $M_d N_{\join{ℓ}{m}}$ where the integers $ℓ$
  and~$m$ such that $p2^{kn/2} ⩽ ℓ < (p+1)2^{kn/2}$ and $q2^{kn/2} ⩽ m <
  (q+1)2^{kn/2}$.  The matrix on the right of~$M_d N_{\join{ℓ}{m}}$ and the
  matrix below it are respectively $M_d N_{\join{(ℓ+1)}{m}}$ and $M_d
  N_{\join{ℓ}{(m+1)}}$ where $ℓ+1$ and $m+1$ must be understood modulo
  $2^{kn/2}$ in order to remain in the right intervals.  Let $P_1$, $P_2$,
  $P_3$ and~$P_4$ be the parts of~$P$ that overlap respectively the
  matrices $M_d N_{\join{ℓ}{m}}$, $M_d N_{\join{ℓ}{(m+1)}}$, $M_d
  N_{\join{(ℓ+1)}{m}}$ and $M_d N_{\join{(ℓ+1)}{(m+1)}}$.  They are
  pictured in Figure~\ref{fig:occurrence}.
  
  If $j = 0$, the parts $P_2$ and~$P_4$ of the occurrence are empty.  If $i
  + k ⩽ n$, the parts $P_3$ and~$P_4$ of the occurrence are also empty.
  The simplest case is  the system  of equations 
  $M_d N_{\join{\ell}{m}} = P$ when $i = j = 0$ and $k = n$.
  We now treat the general case where none are empty, the other cases are similar
  and easier.

  We state now the system of equations.
  The unknowns are the entries in the matrix $N_{\join{ℓ}{m}}$. 
  We claim that  they can be found from the matrix $A_d$
  the pair $(i,j)$ and the arrays~$P_1$,$P_2$,$P_3$ and $P_4$.  
  As explained before, since  $k=\join{ℓ}{m}$,
  the first $n-k$ rows of the matrix $N_{\join{ℓ}{m}}$ are fixed by the
  subarray~$B$. 
  This part is marked in dark grey in Figure~\ref{fig:p1}.
  
  The height and width of~$P_1$ are respectively $n-i$ and $n-j$.  The
  part~$P_1$ is obtained by the multiplication of the last $n-i$ rows of
  the matrix~$M_d$ and the last $n-j$ columns of the
  matrix~$N_{\join{ℓ}{m}}$ (see grey parts in Figure~\ref{fig:p1part1}).
  Now we use the fact that the matrix~$M_d$ is upper triangular.  This
  reduces the parts of $M_d$ and~$N_{\join{ℓ}{m}}$ used to compute~$P_1$
  (see grey parts in Figure~\ref{fig:p1part2}).  Furthermore, the part used
  in~$M_d$ is upper triangular matrix with~$1$ on the diagonal.  This
  matrix is therefore invertible.  This means that the grey part
  in~$N_{\join{ℓ}{m}}$ can be obtained by multiplying the inverse of this
  triangular matrix with~$P_1$.

  The height and width of~$P_2$ are respectively $n-i$ and~$j$.  The
  part~$P_2$ is obtained by the multiplication of the last $n-i$ rows of
  the matrix~$M_d$ and the first $j$ columns of the
  matrix~$N_{\join{ℓ}{(m+1)}}$ (see grey parts in
  Figure~\ref{fig:p2part1}).  As for~$P_1$, the fact that the matrix~$M_d$
  is upper triangular reduces the parts of $M_d$ and~$N_{\join{ℓ}{(m+1)}}$
  used to compute~$P_2$ (see grey parts in Figure~\ref{fig:p2part2}).
  Furthermore, the part used in~$M_d$ is again invertible.  This means that
  the grey part in~$N_{\join{ℓ}{(m+1)}}$ can be obtained by multiplying the
  inverse of this triangular matrix with~$P_2$.
  
  We claim that $n-i$ rows of~$N_{\join{ℓ}{m}}$ are determined by part
  known in $N_{\join{ℓ}{m}}$ and~$N_{\join{ℓ}{(m+1)}}$.  This is because,
  for integers $r$ and~$s$, the last $r$ binary digits of~$s$ determine the
  last $r$ binary digits of~$s+1$ and that conversely the last $r$ binary
  digits if~$s+1$ determine the last $r$ binary digits of~$s$.  It follows
  that the last $i$ rows of the four matrices $N_{\join{ℓ}{m}}$,
  $N_{\join{ℓ}{(m+1)}}$, $N_{\join{(ℓ+1)}{m}}$ and
  $N_{\join{(ℓ+1)}{(m+1)}}$ are known.  These parts are marked in dark grey
  in the Figure~\ref{fig:p2}.

  The height and width of~$P_3$ are respectively $i+k-n$ and $n-j$.
  The part~$P_3$ is obtained by the multiplication of the first $i+k-n$
  rows of the matrix~$M_d$ and the last $n-j$ columns of the
  matrix~$N_{\join{(ℓ+1)}{m}}$ (see grey parts in
  Figure~\ref{fig:p3part1}).  Now we use the fact that the first $n-k$
  and the last $i$ rows of $N_{\join{(ℓ+1)}{m}}$ are known.  The elements
  of these rows can be considered as constants in the system of equations.
  Combining this latter result and the fact that the square $(i+k-n) ×
  (i+k-n)$ submatrix of~$M_d$ in  grey in Figure~\ref{fig:p3part2})
  is invertible 
  the still unknown entries in the last $n-j$
  columns of~$N_{\join{(ℓ+1)}{m}}$ can be found.
  
  The height and width of~$P_4$ are respectively $i+k-n$ and~$j$.  By a
  reasoning very similar used with~$P_3$, the remaining entries of the
  matrix~$N_{\join{(ℓ+1)}{(m+1)}}$ can be found, see
  Figures \ref{fig:p4part1} and~\ref{fig:p4part2}.
\end{proof}

In Proposition~\ref{prop:main} the case $k = n$ states that the  Pascal toroidal array for $n=2^d$  is a nested 
$(n,n,n,n)$-perfect toroidal array.
This  is proves Theorem~\ref{thm:existence}.

\section{Proof of Theorem~\ref{thm:constructions}}

We consider a family of  matrices   that were first defined in~\cite{BecherCarton2019}.  
These matrices are obtained by applying some rotations
to columns of the matrices~$M_d$  given in 
Definition~\ref{def:Md}.

Let $\sigma$ be the function which maps
each word $a_1 \cdots a_n$ to $a_na_1a_2 \cdots a_{n-1}$ obtained by moving
the last symbol to the front.  Since words over~$\FT$ are identified with
column vectors, the function~$\sigma$ can also be applied to a column vector.

\begin{definition}[Pascal-like matrices]\label{def:Msigma}
  Let $d$ be a non negative integer and let $n=2^d$.  Let
  $m_0,\ldots,m_{n-1}$ be  integers such that $m_{n-1} = 0$
  and $m_{i+1} \le m_i \le m_{i+1}+1$ for each integer $0 \le i < n$.  Let
  $C_0,\ldots,C_{n-1}$ be the columns of $M_d$, that is, $M_d=
  (C_0,\ldots,C_{n-1})$. Define
  \begin{displaymath}
  M_d^{m_0,\ldots,m_{n-1}}= \bigl(\sigma^{m_0}(C_0),\ldots,\sigma^{m_{n-1}}(C_{n-1})\bigr).
  \end{displaymath}
\end{definition}

The following are the eight possible matrices $M_d^{m_0,\ldots,m_{n-1}}$ for $d
= 2$ and $n = 2^2$.

\begin{displaymath}
  \begin{array}{cccc} 
    M_2^{0,0,0,0} & M_2^{1,0,0,0} & M_2^{1,1,0,0} & M_2^{2,1,0,0} \\
    \begin{pmatrix}
      1 & 1 & 1 & 1 \\
      0 & 1 & 0 & 1 \\
      0 & 0 & 1 & 1 \\
      0 & 0 & 0 & 1
    \end{pmatrix}
    & 
    \begin{pmatrix}
      0 & 1 & 1 & 1 \\
      1 & 1 & 0 & 1 \\
      0 & 0 & 1 & 1 \\
      0 & 0 & 0 & 1
    \end{pmatrix} 
    & 
    \begin{pmatrix}
      0 & 0 & 1 & 1 \\
      1 & 1 & 0 & 1 \\
      0 & 1 & 1 & 1 \\
      0 & 0 & 0 & 1
    \end{pmatrix}
    & 
    \begin{pmatrix}
      0 & 0 & 1 & 1 \\
      0 & 1 & 0 & 1 \\
      1 & 1 & 1 & 1 \\
      0 & 0 & 0 & 1
    \end{pmatrix} \\[10mm]
    M_2^{1,1,1,0} & M_2^{2,1,1,0} & M_2^{2,2,1,0} & M_2^{3,2,1,0} \\
    \begin{pmatrix}
      0 & 0 & 0 & 1 \\
      1 & 1 & 1 & 1 \\
      0 & 1 & 0 & 1 \\
      0 & 0 & 1 & 1
    \end{pmatrix}
    & 
    \begin{pmatrix}
      0 & 0 & 0 & 1 \\
      0 & 1 & 1 & 1 \\
      1 & 1 & 0 & 1 \\
      0 & 0 & 1 & 1
    \end{pmatrix} 
    & 
    \begin{pmatrix}
      0 & 0 & 0 & 1 \\
      0 & 0 & 1 & 1 \\
      1 & 1 & 0 & 1 \\
      0 & 1 & 1 & 1
    \end{pmatrix}
    & 
    \begin{pmatrix}
      0 & 0 & 0 & 1 \\
      0 & 0 & 1 & 1 \\
      0 & 1 & 0 & 1 \\
      1 & 1 & 1 & 1
    \end{pmatrix}
  \end{array}
\end{displaymath}
\medskip

The matrix $M_d^{0,\ldots, 0}$ is exactly the matrix $M_d$ of
Definition~\ref{def:Md}.  Not only $M_d^{0,\ldots, 0}$ but all the matrices
$M_d^{m_0,\ldots,m_{n-1}}$ of Definition~\ref{def:Msigma} have the property
that all the square submatrices on the top and on the right are invertible.

A proof of this result appears 
in~\cite[Lemmas 4 and 5]{BecherCarton2019}.
We include them below.

\begin{lemma}[\protect{\cite[Lemma 4]{BecherCarton2019}}] \label{lem:invert-right-sm}
  Let $d$ be a non negative integer and let $n=2^d$.  Let matrix $M$ be a
  one of $M_d^{m_0,\ldots,m_{n-1}}$.  Let $\ell$ and~$k$ be two integers
  such that $0 \le \ell < \ell + k \leq n$.  The submatrix given by the $k$
  rows $\ell,\ell+1,\ldots,\ell+k-1$ and the last $k$ columns
  $n-k,\ldots,n-1$ of~$M$ is invertible.
\end{lemma}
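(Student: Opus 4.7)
The plan is to argue by induction on the dimension parameter $d$, exploiting the recursive block structure
\[
M_{d+1} = \begin{pmatrix} M_d & M_d \\ 0 & M_d \end{pmatrix}
\]
together with a ``no-wraparound'' observation about the shifts. The base case $d=0$ is immediate since $M_0 = (1)$ and the only admissible submatrix is $(1)$ itself.

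Before launching the induction I would record a key structural fact. The constraints $m_{n-1}=0$ and $m_{j+1} \le m_j \le m_{j+1}+1$ together imply $m_j \le n-1-j$ for every $j$. Since $M_d$ is upper triangular, its $j$-th column $C_j$ has zeros in all positions below $j$, hence at least $n-1-j$ trailing zeros. Consequently the cyclic shift $\sigma^{m_j}$ does \emph{not} wrap any nonzero entry of $C_j$ from the bottom back to the top: the vector $\sigma^{m_j}(C_j)$ is simply $C_j$ translated downward by $m_j$ positions with zeros inserted at the top, and its nonzero support lies within the rows $m_j, m_j+1, \ldots, j + m_j$.

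For the inductive step I would analyze the right $k \times k$ submatrix of $M_{d+1}^{m_0,\ldots,m_{n-1}}$ by cases on how the chosen rows and columns interact with the block decomposition. When the last $k$ columns all lie in the right block-column of shape $\binom{M_d}{M_d}$ (which happens when $k \le 2^d$), the no-wraparound observation lets us identify the resulting submatrix, after a harmless row permutation if needed, with a right submatrix of a Pascal-like matrix of smaller dimension $d$, to which the inductive hypothesis applies directly. When $k > 2^d$, the selected columns straddle both block-columns, and I would perform block row operations---adding the top half of each row to its bottom half, which is invertible over $\FT$---to clear one $M_d$ block and reduce the invertibility question to a smaller right submatrix of a Pascal-like matrix for dimension $d$, with shifts $m'_0,\ldots,m'_{2^d-1}$ derived from the original sequence.

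The main obstacle will be the combinatorial bookkeeping in the straddling case: one must verify that the derived shifts $m'_i$ still satisfy the lemma's hypotheses, namely $m'_{2^d-1} = 0$ together with the unit-jump condition $m'_{i+1} \le m'_i \le m'_{i+1}+1$, so that the inductive hypothesis actually applies to the reduced problem. This amounts to checking that the operation of restricting and adjusting a valid shift sequence for dimension $d+1$ produces a valid shift sequence for dimension $d$; the cleanest formulation is probably to state a slightly strengthened inductive claim that keeps track of the shift window after reduction. Once this bookkeeping is carried out, the inductive hypothesis closes the argument.
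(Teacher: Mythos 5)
Your approach (induction on $d$ via the block structure of $M_{d+1}$) is genuinely different from the paper's, which never inducts on $d$ at this point: the paper instead performs explicit row reduction on the $k\times k$ submatrix $P$ itself, using the shifted Pascal recurrence $M_{i,j}=M_{i-1,j}\oplus M_{i,j+1}$ (when $m_j=m_{j+1}$) or $M_{i,j}=M_{i-1,j}\oplus M_{i-1,j+1}$ (when $m_j=m_{j+1}+1$) together with the fact that the last column is all $1$s, to transform $P$ into an anti-triangular matrix with $1$s on the anti-diagonal. Your preliminary observation is correct and useful: $m_j\le n-1-j$ follows from the hypotheses, so $\sigma^{m_j}(C_j)$ never wraps and has support exactly in rows $m_j,\ldots,j+m_j$ (this is the paper's upper/lower border picture). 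But as a proof plan your induction has two unresolved problems that go beyond bookkeeping.

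First, in the case $k\le 2^d$: the right block-column of $M_{d+1}$ consists of columns $\bigl(\begin{smallmatrix}C_j\\ C_j\end{smallmatrix}\bigr)$, which are $2^d$-periodic, so a cyclic shift of such a column does reduce to a shift of a dimension-$d$ column. However, when you restrict to $k$ consecutive rows $\ell,\ldots,\ell+k-1$ of $M_{d+1}$ and reduce row indices modulo $2^d$, a window that straddles row $2^d$ becomes a \emph{cyclically wrapped} window of rows in the dimension-$d$ matrix, i.e.\ a prefix together with a suffix. This is not a consecutive-row submatrix up to permutation, so the inductive hypothesis as stated (which requires $\ell+k\le n$) does not apply; you would have to strengthen the lemma to cyclic row windows and prove that stronger statement by induction. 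Second, and more seriously, in the case $k>2^d$ the selected columns include columns from the left block-column $\bigl(\begin{smallmatrix}C_j\\ 0\end{smallmatrix}\bigr)$, and each of these has been shifted downward by its own amount $m_j$, which can exceed $2^d$. After these unequal shifts there is no longer an aligned $\bigl(\begin{smallmatrix}M_d & M_d\\ 0 & M_d\end{smallmatrix}\bigr)$ block structure, so the proposed block row operation ``add the top half of each row to its bottom half to clear one $M_d$ block'' does not act uniformly on the columns and does not produce a Pascal-like matrix of dimension $d$; this step would fail as described. A workable fix is to abandon the block induction for the submatrix and instead exploit the one-step recurrence between adjacent columns directly, which is exactly what the paper's row-reduction argument does.
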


\begin{proof}
  Note that for $k = n$ and $\ell = 0$, the submatrix in the statement of
  the lemma, is the whole matrix $M_d^{m_0,\ldots,m_{n-1}}$, which is
  clearly invertible.  By Lemma~\ref{lem:pascal}, each entry~$M_{i,j}$ for
  $0< i <n$ and $0\leq j < n-1$ of the matrix~$M$ satisfies either $M_{i,j}
  = M_{i-1,j} \oplus M_{i,j+1}$ if $m_j = m_{j+1}$ (the column~$C_j$ has
  been rotated as much as the column~$C_{j+1}$) or $M_{i,j} = M_{i-1,j}
  \oplus M_{i-1,j+1}$ if $m_j = m_{j+1} + 1$ (the column~$C_j$ has been
  rotated once more than the column~$C_{j+1}$).

  Let $P$ be the submatrix in the statement of the lemma:
      \begin{center}
    \begin{tikzpicture}
      \node at (-0.65,0) {$M = $};
      \node at (0,0) {$\left(\vphantom{\rule{0cm}{1cm}} \right.$};
      \node at (2,0) {$\left.\vphantom{\rule{0cm}{1cm}} \right)$};
      \node at (1.55,0.1) {$P$};
      \draw (1.15,-0.3) rectangle (1.95,0.5);
      \draw [<->] (1.15,-0.4) -- (1.95,-0.4);
      \node at (1.55,-0.6) {$k$};
      \draw [<->] (0,-0.4) -- (1.15,-0.4);
      \node at (0.6,-0.6) {$n-k$};
      \draw [<->] (1.05,0.5) -- (1.05,-0.3);
      \node at (0.9,0.1) {$k$};
      \draw [<->] (1.05,0.8) -- (1.05,0.5);
      \node at (0.9,0.65) {$\ell$};
      \draw [<->] (2.15,0.9) -- (2.15,-0.9);
      \node at (2.45,0) {$n$};
      \draw [<->] (0,-0.95) -- (2,-0.95);
      \node at (1,-1.2) {$n$};
    \end{tikzpicture}
    \end{center}
  To prove that~$P$ is invertible we apply
  transformations to make it triangular.  
  Note that all entries of the last column are~$1$.  
  The first transformation applied to~$P$ is as follows.
  The row $L_0$ is left unchanged and the row~$L_i$ for $1 \le i < k$ is
  replaced by $L_i \oplus L_{i-1}$.  
  All entries of the last column but its top most one become zero.  Furthermore, each entry is $P_{i,j}$ is 
  replaced by either $P_{i,j+1}$ or~$P_{i-1,j+1}$ depending on the value
  $m_j - m_{j+1}$.  
 Note also that the new values of the entries still
  satisfy either $P_{i,j} = P_{i-1,j} \oplus P_{i,j+1}$ or
  $P_{i,j} = P_{i-1,j} \oplus P_{i-1,j+1}$ depending on the value
  $m_j - m_{j+1}$.  
  
  The second transformation applied to~$P$ is as follows.
  The rows $L_0$ and~$L_1$ are left unchanged and each row~$L_i$ for
  $2 \le i < k$ is replaced by $L_i \oplus L_{i-1}$.  All entries of the
  second to last columns but its two topmost ones are now zero.  
  At step~$t$
  for $0 \le t < k$, rows $L_0,\ldots,L_t$ are left unchanged and each
  row~$L_i$ for $t+1 \le i < k$ is replaced by $L_i \oplus L_{i-1}$.
  After applying all these transformations for $0 \le t < k$, each
  entry~$P_{i,j}$ for $i+j = k-1$ satisfies $P_{i,j} = 1$ and each
  entry~$P_{i,j}$ for $i+j > k-1$ satisfies $P_{i,j} = 0$.  It follows that
  the determinant of~$P$ is~$1$ and that the matrix~$P$ is invertible.
\end{proof}

Let $n = 2^d$ for some $d ⩾ 1$ and let $M$ be one
matrix~$M_d^{m_0,\ldots,m_{n-1}}$.  We introduce the notions of
\emph{upper} and \emph{lower borders} of such a
matrix~$M_d^{m_0,\ldots,m_{n-1}}$.  An entry~$M_{i,j}$ for $0 ⩽ i,j < n$ is
said to be in the \emph{upper border} (respectively \emph{lower border})
of~$M$ if $M_{i,j} = 1$ and $M_{k,j} = 0$ for all $k=0,\ldots, i-1$
(respectively for all $k = i+1,\ldots,n-1$).  For example, the upper border
of the matrix~$M^{0,…,0}_d=M_d$ is the first row and its lower border is
the main diagonal.  The upper and lower borders in column~$i$ lie in lines
$m_i$ and $m_i+i$ respectively.

\begin{figure}[htbp]
\begin{displaymath}
  M_3^{3,3,2,1,1,1,0,0} = 
  \begin{pmatrix}
     0 & 0 & 0 & 0 & 0 & 0 & \underline{\mathbf{1}} & \mathbf{1} \\
     0 & 0 & 0 & \underline{\mathbf{1}} & \mathbf{1} & \mathbf{1} & 0 & 1 \\
     0 & 0 & \underline{\mathbf{1}} & 1 & 0 & 1 & 1 & 1 \\
     \underline{\mathbf{1}} & \mathbf{1} & 0 & 1 & 0 & 0 & 0 & 1 \\
     0 & \underline{\mathbf{1}} & \mathbf{1} & \mathbf{1} & 0 & 0 & 1 & 1 \\
     0 & 0 & 0 & 0 & \underline{\mathbf{1}} & 1 & 0 & 1 \\
     0 & 0 & 0 & 0 & 0 & \underline{\mathbf{1}} & \mathbf{1} & 1 \\
     0 & 0 & 0 & 0 & 0 & 0 & 0 & \underline{\mathbf{1}} 
  \end{pmatrix}
\end{displaymath}
\caption{Upper and lower borders of  $M_3^{3,3,2,1,1,1,0,0}$
         are shown in boldface.}
\label{fig:upperlower}
\end{figure}

\begin{displaymath}
  \begin{array}{r|c|c|c|c|c|c|c|c}
    \text{Column $i$}   & 0 & 1 & 2 & 3 & 4 & 5 & 6 & 7 \\ \hline
    \text{Upper border $m_i$} & 3 & 3 & 2 & 1 & 1 & 1 & 0 & 0 \\
    \text{Lower border $m_i+i$} & 3 & 4 & 4 & 4 & 5 & 6 & 6 & 7 
  \end{array}
\end{displaymath}
\medskip

Both borders of matrix~$M_d^{m_0,\ldots,m_{n-1}}$ start in the unique
entry~$1$ of the first column.  The upper border ends in the top most entry
of the last column and the lower border ends in the bottom most entry of
the last column.  The upper border is only made of either East or North-East
steps and the lower border is only made of either East or South-East steps.  The
upper border contains an East step from column~$C_j$ to column~$C_{j+1}$ if
$m_j=m_{j+1}$ and contains a North-East step if $m_j = m_{j+1}+ 1$.
Furthermore, whenever the upper border uses an East (respectively
North-East) step to go from one columns to its right neighbour, the lower
border uses a South-East (respectively East) step.  This is because the
distance from the upper border to the lower border in column~$i$ is
exactly~$i$.  This allows us to define a function~$τ$ from $\{0,…,n-1\}$ to
$\{0,…,n-1\}$ as follows.

\begin{displaymath}
  τ(i) =
  \begin{cases}
    m_i   & \text{if either $i = 0$ or $m_{i-1} = m_i + 1$} \\
    m_i+i & \text{otherwise, that is, $i > 0$ and $m_{i-1} = m_i$}
  \end{cases}
\end{displaymath}
\medskip

The value of $τ(i)$ is the line index of either the upper or the lower
border in column~$i$.  It follows from the definition of the function~$τ$
that $M_{τ(i),i} = 1$ and $M_{τ(i),j} = 0$ for each $0
⩽ j < i$.  The values of the function~$τ$ for the matrix of
Figure~\ref{fig:upperlower} are given below.  In
Figure~\ref{fig:upperlower}, the $1$s of the borders corresponding to values
of~$τ$ are underlined.  Note that there is exactly a single underlined~$1$
in each line and in each column.

\begin{displaymath}
  \begin{array}{c|c|c|c|c|c|c|c|c}
          i    & 0 & 1 & 2 & 3 & 4 & 5 & 6 & 7 \\ \hline
          τ(i) & 3 & 4 & 2 & 1 & 5 & 6 & 0 & 7 
  \end{array}
\end{displaymath}
\medskip

The function~$τ$ is onto and thus bijective because each leftmost
occurrence of~$1$ in each line belongs to either the upper or the lower
border.  It follows that each $j$ in~$\{0,…,n-1\}$ is equal to~$τ(i)$ where
$i$ is the column of the leftmost~$1$ in line~$j$.

Due to the symmetry in the matrix $M^{0,\ldots,0}_d=M_d$,
Lemma~\ref{lem:invert-right-sm} applies also to the submatrices of
$M^{0,\ldots,0}_d$ obtained by selecting the first row.  Since this
symmetry is lost in the other matrices $M_d^{m_0,\ldots,m_{n-1}}$, we need 
to consider the rotations made to the columns
in~$M^{0,\ldots,0}_d$ to obtain $M_d^{m_0,\ldots,m_{n-1}}$.

\begin{lemma}[\protect{\cite[Lemma5]{BecherCarton2019}}] \label{lem:invert-top-sm}
  Let $d$ be a non negative integer and let $n=2^d$.
  Let matrix  $M$ be  one of $M_d^{m_0,\ldots,m_{n-1}}$.  
  Let $k$ be an
  integer such that $1 \le k \leq n$.  
  The $k\times k$-submatrix of $M$ 
  given by  $k$ consecutive rows and $k$ consecutive columns 
  with its top right entry
  on the upper border of~$M$ is  invertible.
\end{lemma}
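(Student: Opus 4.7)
My plan is to follow the same row-reduction template used in the proof of Lemma~\ref{lem:invert-right-sm}.  The starting observation is an explicit description of the submatrix $P$ in terms of $M_d$, which makes visible a ``staircase'' of $1$s.  Write $\ell = m_{j_0+k-1}$ and $a_s = m_{j_0+s} - \ell$ for $0 \leq s < k$.  The constraints on the sequence $(m_i)$ give $a_{k-1}=0$, $a_s - a_{s+1} \in \{0,1\}$, and $0 \leq a_s \leq k-1$.  Using $M_{i,j} = (C_j)_{(i - m_j) \bmod n}$ together with the fact that $M_d$ is upper triangular, I would show that $P_{r,s} = (M_d)_{r - a_s,\, j_0 + s}$ whenever $r \geq a_s$ and $P_{r,s} = 0$ whenever $r < a_s$.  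In particular every entry on the staircase $(a_s, s)$ equals $1$ and everything strictly above is $0$; this is the analogue, for the upper border, of the ``all-$1$s last column'' used in Lemma~\ref{lem:invert-right-sm}.

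The reduction then runs precisely the same iterated transformation as in Lemma~\ref{lem:invert-right-sm}: at step $t = 0, 1, \ldots, k-2$ leave rows $L_0, \ldots, L_t$ unchanged and replace $L_r$ by $L_r \oplus L_{r-1}$ for $r > t$.  Using the recurrence
\[
M_{i,j} = \begin{cases} M_{i-1,j} \oplus M_{i,j+1} & \text{if } m_j = m_{j+1}, \\ M_{i-1,j} \oplus M_{i-1,j+1} & \text{if } m_j = m_{j+1}+1, \end{cases}
\]
which was established from Lemma~\ref{lem:pascal} for $0 < i - m_j < n$, each elementary operation rewrites an entry $P_{r,s}$ with $s < k-1$ as $(M_d)_{r - a_s,\, j_0 + s + 1}$, effectively shifting the underlying $M_d$-column index one step to the right; the rightmost column updates, via the Pascal identity $(M_d)_{r,j} \oplus (M_d)_{r-1,j} = (M_d)_{r,j+1}$, into $(M_d)_{r,\, j_0 + k}$.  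Iterating along the same schedule as for Lemma~\ref{lem:invert-right-sm}, the staircase of $1$s is progressively moved onto a diagonal of the transformed matrix while all entries on one side of that diagonal become $0$, yielding $\det P = 1$ in $\FT$ and hence the invertibility of~$P$.

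The principal technical obstacle is to verify that the recurrence for $M$ remains applicable at the cells on the staircase itself, where $i = m_j$: the derivation used the strict inequality $0 < i - m_j$, so a separate case analysis is required when $r = a_s$.  A short split on whether $m_j = m_{j+1}$ or $m_j = m_{j+1}+1$ (together with the fact that $(M_d)_{0,\cdot} = 1$ and $(M_d)_{\cdot,0}$ has a single $1$ at the top) shows the identity still holds there and that the staircase $1$ is absorbed into the row immediately below, as required.  A second, milder issue is the boundary case $j_0 + k - 1 = n - 1$, in which the Pascal identity breaks down for the last column of $M_d$; but there the last column of $M$ is the all-ones column, so the hypotheses of Lemma~\ref{lem:invert-right-sm} are already met and the argument collapses to the one given for that lemma.
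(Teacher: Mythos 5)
Your structural description of the submatrix is correct: with $a_s = m_{c+s}-m_{c+k-1}$ one does have $P_{r,s}=(M_d)_{r-a_s,\,c+s}$ for $r\ge a_s$ and $P_{r,s}=0$ for $r<a_s$, so the staircase entries $(a_s,s)$ are $1$ with zeros above. The gap is in the reduction you run afterwards. The engine of the proof of Lemma~\ref{lem:invert-right-sm} is not only the Pascal recurrence but the fact that the last column of that submatrix is the all-ones last column of $M$: under $L_i\mapsto L_i\oplus L_{i-1}$ it collapses to $(1,0,\dots,0)^{T}$, and it is this all-ones column that propagates leftward and builds the anti-diagonal of $1$s with $0$s below. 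A top-border submatrix whose rightmost column is not column $n-1$ of $M$ has no such anchor: as you yourself compute, after one round the last column becomes $(M_d)_{r,\,c+k}$, a column \emph{outside} $P$ about which nothing is known, and the iteration never reaches a triangular form. Concretely, take $M_2^{1,1,0,0}$, $k=3$, rows $0,1,2$ and columns $0,1,2$ (the top-right entry $M_{0,2}=1$ is on the upper border). Your schedule gives
\begin{displaymath}
P=\begin{pmatrix}0&0&1\\1&1&0\\0&1&1\end{pmatrix}
\;\longrightarrow\;
\begin{pmatrix}0&0&1\\1&1&1\\1&0&1\end{pmatrix}
\;\longrightarrow\;
\begin{pmatrix}0&0&1\\1&1&1\\0&1&0\end{pmatrix},
\end{displaymath}
which has a $0$ on the anti-diagonal and $1$s on both sides of it; neither of the two possible triangular conclusions holds. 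A second symptom of the same problem: the staircase may place two pivots in one row and none in another (here $(1,0)$ and $(1,1)$ share row $1$ and row $2$ has no border entry), so it cannot be ``moved onto a diagonal'' by operations that merely shift the $M_d$-column index; the pivot pattern itself has to be rearranged.

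This is precisely why the paper's proof uses \emph{column} operations $C_j\mapsto C_j\oplus C_{j+1}$ rather than row operations: when $m_j=m_{j+1}$ the recurrence gives $M_{i,j}\oplus M_{i,j+1}=M_{i-1,j}$, i.e.\ a shift \emph{upward} within column $j$, which pushes down the zeros that are guaranteed above the upper border; and the columns corresponding to North-East steps of the border ($j_0,\dots,j_t$ in the paper's notation) must be exempted at the right moments because the recurrence takes the other form there. The two ``technical obstacles'' you flag (the recurrence at the staircase cells themselves, and the case where the rightmost column is column $n-1$, which indeed reduces to Lemma~\ref{lem:invert-right-sm}) are handled correctly in your sketch, but they are not where the difficulty lies; the central claim that the Lemma~\ref{lem:invert-right-sm} row schedule triangularizes $P$ is false, and the proof needs to be rebuilt around the border structure.
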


\begin{proof}
   Let $P$ be the submatrix of $M$ in the statement of the lemma:
   \begin{center}
      \begin{tikzpicture}
      \node at (-0.65,0) {$M = $};
      \node at (0,0) {$\left(\vphantom{\rule{0cm}{1cm}} \right.$};
      \node at (2,0) {$\left.\vphantom{\rule{0cm}{1cm}} \right)$};
      \node at (1.05,0.1) {$P$};
      \draw (0.65,-0.3) rectangle (1.45,0.5);
      \draw [<->] (0.65,-0.4) -- (1.45,-0.4);
      \node at (1.05,-0.6) {$k$};
      \draw [<->] (0.55,0.5) -- (0.55,-0.3);
      \node at (0.4,0.1) {$k$};
      \node at (1.39,0.4) {$\scriptscriptstyle 1$};
      \node at (1.39,0.59) {$\scriptscriptstyle 0$};
      \node at (1.39,0.74) {$\scriptscriptstyle 0$};
      \draw [<->] (2.15,0.9) -- (2.15,-0.9);
      \node at (2.45,0) {$n$};
      \draw [<->] (0,-0.95) -- (2,-0.95);
      \node at (1,-1.2) {$n$};
    \end{tikzpicture}
  \end{center}
  We apply transformations to the submatrix~$P$ to put it in a nice form
  such that the determinant is easy to compute.  To fix notation, suppose
  that the submatrix~$P$ is obtained by selecting rows
  $L_{r},\ldots,L_{r+k-1}$ and columns $C_{s},\ldots,C_{s+k-1}$.  The
  hypothesis is that the entry  $M_{r,s+k-1}$ is in the upper border of~$M$.  Note
  that the upper borders of $M$ and~$P$ coincide inside~$P$.  We denote by
  $j_0,\ldots, j_t$ the indices of the columns of~$P$ in $0,\ldots,k-1$
  originally defined by a North-East step of the upper border.  This means
  that $j_0,\ldots,j_t$ is the sequence of indices~$j$ such that 
  $m_{s+j-1} = m_{s+j} + 1$.  
By convention, we set $j_0 = 0$, that
  is, the index of the first column of~$P$.
 
  The first transformation applied to the matrix~$P$ is the following. 
  The columns $C_0,\ldots,C_{j_t-1}$ and~$C_{k-1}$ are left unchanged and each
  column~$C_j$ for $j_{t} \le j < k-1$ is replaced by $C_j \oplus C_{j+1}$.
  All entries of the first row but its right most one become~zero.
  Furthermore, each entry $P_{i+1,j}$ for $j_{t} \le j < k-1$ is replaced by
  $P_{i,j}$.   
  The second transformation applied to the matrix~$P$ is the  following.  
  The columns $C_0,\ldots,C_{j_{t-1}-1}$ and~$C_{k-2},C_{k-1}$ are
  left unchanged and each column~$C_j$ for 
  $j_{t-1} \le j < k-2$ is
  replaced by $C_j \oplus C_{j+1}$.  
    The first row remains unchanged and
  all entries of the second row but the last two become~$0$.  
  We apply in total $t+1$
  transformations like this one using successively $j_t,j_{t-1},\ldots,j_0$.
Then $k-t$ further steps are made, obtaining that for each row $i$,  all entries   but the last $i$ become~$0$.  

  After applying all these transformations, each entry~$P_{i,j}$ for
  $i+j = k-1$ satisfies $P_{i,j} = 1$ and each entry~$P_{i,j}$ for
  $i+j < k-1$ satisfies $P_{i,j} = 0$.  It follows that the determinant
  of~$P$ is~$1$ and that the matrix~$P$ is invertible.
\end{proof}

We  define the \emph{affine toroidal arrays}
 by considering  the family of Pascal-like  matrices.

\begin{definition}[Affine toroidal array]\label{def:affine}
Let $d$ be a non-negative integer and let  $n = 2^d$.  
Let $M$ be any   Pascal-like matrix of size $n\times n$ (Definition~\ref{def:Msigma}).
Let $N_0,\ldots,N_{2^{n^2}-1}$ be the enumeration 
of all~$n\times n$ matrices over~$\FT$
(Definition~\ref{def:enum})
and let $Z$ be any $n\times n$ matrix over~$\FT$.

An  array $A$  of size
  $ n2^{n^2/2} × n2^{n^2/2}$
is   $(n,n,n,n)$-affine 
if    for each integer~$k$ such that $0 ⩽ k < 2^{n^2}$,
the matrix $M N_k \oplus Z$ is placed in~$A$ with  its upper left
corner cell  at position $(\odd(k)n, \even(k)n)$. 
\end{definition}

Since  each  matrix $M=M_d^{m_0,\ldots,m_{n-1}}$ is
 invertible, every  matrix~$Z$ is equal to~$MZ'$ for some matrix~$Z'$.

For an array $Z$ we write $(Z)^{(n\times n)}$ to denote the array  given by  
 $n^2$ copies of $Z$, $n$ rows and $n$ columns.
The next result states that any nested
perfect array can be transformed into another one of the same size but having the matrix $0$ in the upper corner.

In what follows we use the operation $\oplus$ on subarrays denoting the usual sum on matrices of elements in~$\mathbb{F}_2$.

\begin{lemma} \label{lem:xor-npn}
Let $d$ be a non-negative integer. 
Fix  $n = 2^d$ .  
  Let $A$ be an array of size   $ n2^{n^2/2} × n2^{n^2/2}$
 and let $Z$ be an array of size $n\times n$.
  Then  $A$ is a nested $(n,n,n,n)$-perfect toroidal array if and only if
  $A \oplus (Z)^{{2^{n^2/2}}\times{2^{n^2/2}}}$ is a nested $(n,n,n,n)$-perfect toroidal array.
\end{lemma}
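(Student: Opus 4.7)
The plan is to reduce the lemma to the fact that XOR with a doubly $n$-periodic pattern preserves $(s,n,n,n)$-perfection for every $s$. Since $\oplus$ is an involution on arrays of the given size, the two directions of the biconditional are symmetric, and I will prove only the forward implication.

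Set $T := (Z)^{(2^{n^2/2} \times 2^{n^2/2})}$. The crucial observation is that $T$ is periodic of period~$n$ in both the row and the column directions. Hence for every $s$ with $1 \le s \le n$ and every position $(k,\ell)$, the $s \times n$ subarray of $T$ at $(k,\ell)$ depends only on $i := k \bmod n$ and $j := \ell \bmod n$; call it $W^{(s)}_{i,j}$. Consequently, within each residue class $(i,j)$ modulo $(n,n)$, the map sending the $s \times n$ content of $A$ at $(k,\ell)$ to the $s \times n$ content of $A \oplus T$ at the same position is the bijection $P \mapsto P \oplus W^{(s)}_{i,j}$ on the set of $s \times n$ patterns over $\FT$. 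Since a bijection sends \emph{each pattern occurring exactly once} to \emph{each pattern occurring exactly once}, this will establish that $(s,n,n,n)$-perfection is preserved by XOR with~$T$.

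To upgrade this to preservation of the nested property, I will fix $k \in \{1,\ldots,n\}$, take any aligned subarray $B$ of $A$ of size $n2^{kn/2} \times n2^{kn/2}$, and observe that its corner coordinates and its side length are multiples of $n2^{kn/2}$, hence of~$n$. Therefore the restriction of $T$ to $B$ is again a tiling by $Z$, namely $(Z)^{(2^{kn/2} \times 2^{kn/2})}$, so the restriction of $A \oplus T$ to the same position equals $B \oplus (Z)^{(2^{kn/2} \times 2^{kn/2})}$. Applying the preservation result of the previous paragraph to $B$ with window height $k$ then yields that this restriction is $(k,n,n,n)$-perfect. Running through every such $k$ and every such $B$ completes the forward implication, and the reverse follows from involutivity.

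I do not anticipate any serious obstacle. The only genuine step is the periodicity-plus-bijectivity argument above; everything else is bookkeeping. The one caveat worth verifying carefully is that the sizes and alignments appearing in the square nested definition are multiples of~$n$, so that $T$ restricts to a $Z$-tiling on every relevant subarray. Once this is checked, preservation is immediate from the bijectivity of $P \mapsto P \oplus W^{(s)}_{i,j}$.
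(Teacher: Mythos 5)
Your proposal is correct and follows essentially the same route as the paper's proof: both rest on the observation that the $Z$-tiling is $n$-periodic, so within a fixed residue class the passage from $A$ to $A\oplus T$ acts on window contents as the bijection $P \mapsto P \oplus W$ for a fixed pattern $W$, and both handle nesting by noting that $T$ restricts to a $Z$-tiling on every aligned subarray. Your use of the involutivity of $\oplus$ for the converse is a slightly cleaner packaging than the paper's contrapositive argument, but it is the same underlying idea.
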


\begin{proof}
Let $A'=A\oplus(Z)^{2^{n^2/2}\times 2^{n^2/2}}$.  
Let $\ell$ be an integer such
  that $1 \le \ell \leq n$ 
  and  let $L$ be an aligned  subarray~of $A$ of
  size~$n2^{\ell}\times n 2^{\ell} $ starting at a position congruent to~$(0,0)$.  
The corresponding subarray $L'$ of $A'$ at the same position is $L' = L \oplus (Z)^{(2^\ell \times 2^\ell)}$. 
  
  First suppose $A$ is nested $(n,n,n,n)$-perfect. Then,~$L$ is a nested $(\ell,\ell,n,n)$-perfect array. 
 Let $i,j$ be such that $0\le i,j < n $
 and  $T$ be the subarray of~$(Z)^{(2\times 2)}$ of size~$\ell\times\ell$ starting at
  position~$(i,j)$.
  Let $U'$ be any array of
 size~$\ell\times\ell$. 
  Then, the array 
  $U = U' \oplus T$ has exactly
  one occurrence in the array~$L$ at some position~$(i',j')$ congruent to~$(i,j)$  modulo~$(n, n)$. 
  It follows that $U'=U\oplus T$ has an occurrence at the
  same position~$(i',j')$ in~$L'$.  
  Since each matrix~$U$ has such an occurrence
  for each possible~$(i,j)$.Since $L'$ has size $n2^{\ell}\times n2^{\ell}$, $L'$ it is a nested
  $(\ell,\ell,n,n)$-perfect array.
  
If  $A$ is not nested $(n,n,n,n)$-perfect, there is a witness  $\ell$,
 $1 \le \ell \leq n$ 
 and a subarray $L$ of size $n2^\ell\times n2^\ell$ that occurs twice at positions in  the same congruence class. With a similar argument as in the previous case it is easy to check that there is a  subarray $L'$ of $A'$ with   two occurrences in $A'$ in the same  congruence class.
 \end{proof}

We can now prove that the affine arrays are  nested perfect arrays.

\begin{proposition} \label{pro:aff2nes}
  Let $d$ be a non negative integer and let   $n = 2^d$.
Every $(n,n,n,n)$-affine array is a nested $(n,n,n,n)$-perfect array.
\end{proposition}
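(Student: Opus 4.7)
The plan is to reduce the statement to the case $Z = 0$ via Lemma~\ref{lem:xor-npn} and then extend the argument of Proposition~\ref{prop:main} from the Pascal matrix $M_d$ to an arbitrary Pascal-like matrix $M = M_d^{m_0,\ldots,m_{n-1}}$, substituting Lemmas~\ref{lem:invert-right-sm} and~\ref{lem:invert-top-sm} for the explicit upper triangular structure of $M_d$ that was exploited there.

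For the reduction, let $A$ be the $(n,n,n,n)$-affine array built from $M$ and offset matrix $Z$, and let $A_0$ be the affine array built from the same $M$ but with offset~$0$. Tile-by-tile the two arrays differ by exactly the constant block $Z$, so $A = A_0 \oplus (Z)^{2^{n^2/2} \times 2^{n^2/2}}$. By Lemma~\ref{lem:xor-npn}, $A$ is nested $(n,n,n,n)$-perfect if and only if $A_0$ is, so it suffices to treat the case $Z = 0$.

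For $Z = 0$, I would follow the structure of the proof of Proposition~\ref{prop:main}. Fix $k$ with $1 \le k \le n$, an aligned subarray $B$ of $A_0$ of size $n2^{kn/2} \times n2^{kn/2}$, a residue class $(i, j)$ with $0 \le i, j < n$, and an array $P$ of size $k \times n$. An occurrence of $P$ at position congruent to $(i, j)$ modulo $(n, n)$ inside $B$ overlaps at most the four tiles $M N_{\join{\ell}{m}}$, $M N_{\join{\ell}{(m+1)}}$, $M N_{\join{(\ell+1)}{m}}$, $M N_{\join{(\ell+1)}{(m+1)}}$ and decomposes $P$ into four blocks $P_1, P_2, P_3, P_4$ as in Figure~\ref{fig:occurrence}. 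Two combinatorial facts from Proposition~\ref{prop:main} survive unchanged, since they depend only on the $\odd$/$\even$ encoding and the alignment of $B$ and not on the choice of $M$: the first $n - k$ rows of each $N$ matrix are determined by $B$, and the last $i$ rows of the four $N$ matrices are mutually determined by any one of them. Existence and uniqueness of the occurrence of $P$ therefore reduce to showing that the induced linear system in the remaining unknown entries of $N_{\join{\ell}{m}}$ has exactly one solution.

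The cascading resolution follows the same order as in Proposition~\ref{prop:main}: $P_1$ and $P_2$ determine blocks of entries of $N_{\join{\ell}{m}}$ and $N_{\join{\ell}{(m+1)}}$ via submatrices of $M$ formed by consecutive rows and the last $n - j$ or $j$ columns, for which invertibility is supplied by Lemma~\ref{lem:invert-right-sm}; then $P_3$ and $P_4$ involve submatrices of $M$ whose top-right entry lies on the upper border of~$M$, for which invertibility is supplied by Lemma~\ref{lem:invert-top-sm}. Substituting each solved block into the subsequent equations closes the cascade. The main obstacle is the loss of the upper triangular shortcut. In Proposition~\ref{prop:main}, upper triangularity of $M_d$ automatically localized each $P_s$ to a triangular corner of the corresponding $N$ matrix (see Figures~\ref{fig:p1part2}--\ref{fig:p4part2}); for a general Pascal-like matrix, the contributing region is shaped instead by the upper and lower borders of $M$ and the bijection $\tau$. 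A careful bookkeeping is required to identify, in each of the four sub-equations, exactly which rows of $M$ and which entries of the corresponding $N$ matrix remain non-trivially coupled, and to verify that the resulting coefficient submatrix is of the shape handled by Lemma~\ref{lem:invert-right-sm} or Lemma~\ref{lem:invert-top-sm}. Once this bookkeeping is in place, the rest is a direct transcription of the argument of Proposition~\ref{prop:main}.
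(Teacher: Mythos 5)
Your overall strategy matches the paper's: reduce to $Z=0$ via Lemma~\ref{lem:xor-npn} (your tile-by-tile identity $A = A_0 \oplus (Z)^{2^{n^2/2}\times 2^{n^2/2}}$ is exactly the reduction used), then rerun the argument of Proposition~\ref{prop:main} with Lemmas~\ref{lem:invert-right-sm} and~\ref{lem:invert-top-sm} replacing upper triangularity. The case $i+k \leqslant n$ goes through essentially as you describe: each column of $P_1$ (respectively $P_2$) yields a square system whose coefficient matrix consists of $k$ consecutive rows against the last $k$ columns of $M$, invertible by Lemma~\ref{lem:invert-right-sm}.

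The gap is in the case $i+k>n$, which is the crux, and your proposed resolution order --- solve $P_1,P_2$ first by Lemma~\ref{lem:invert-right-sm}, then $P_3,P_4$ by Lemma~\ref{lem:invert-top-sm} --- does not work there. When $i+k>n$, the rows of $M$ meeting $P_1$ are rows $i,\dots,n-1$, only $n-i<k$ of them, while each column of $N_{\ell\vee m}$ carries $k$ unknown entries; since a general Pascal-like $M$ is not upper triangular, these $n-i$ equations involve all $k$ unknowns, so the system coming from $P_1$ alone is underdetermined and there is no square submatrix of consecutive rows and last columns to which Lemma~\ref{lem:invert-right-sm} applies. The equations from $P_1P_2$ and from $P_3P_4$ must be combined, and the paper does this with machinery you do not mention: the border bijection $\tau$ selects, for each column index $s$, a row $\tau(s)$ of $M$ whose leftmost $1$ sits in column $s$; this row lies in either the $P_1P_2$ band or the $P_3P_4$ band, and a descending induction on $s$ recovers rows $n-1,\dots,r$ of $N_{\ell\vee m}$ one at a time, interleaving the four parts. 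Only the residual block of rows $n-k,\dots,r-1$ is then handled by a single application of Lemma~\ref{lem:invert-top-sm}, to a square submatrix whose position inside $M$ is dictated by where $\tau$ switches bands --- not simply by the location of $P_3$ or $P_4$. The ``careful bookkeeping'' you defer is precisely this two-phase argument; it is the substantive content of the proof rather than a transcription of Proposition~\ref{prop:main}.
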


\begin{proof}
  By Lemma~\ref{lem:xor-npn}, it may be assumed that the array~$Z$ in the
  definition of affine array is the zero matrix.  Let $M$ be one of the
  matrices $M_d^{m_0,\ldots,m_{n-1}}$.  Suppose $A$ is the
  $(n,n,n,n)$-affine array defined by $M$ and the zero matrix.  The proof
  of the Proposition follows that of Proposition~\ref{prop:main}, but now
  considering the matrix $M=M_d^{m_0,\ldots,m_{n-1}}$ instead of the Pascal
  matrix $M_d^{0,\ldots,0}$ Let $k$ be an integer such that $1 \le k \le
  n$.  Let $B$ be an aligned subarray of~$A$ of sizes $n2^{kn/2} ×
  n2^{kn/2}$.  The coordinates of the upper left corner of $B$ are of the
  form $pn2^{kn/2}$ and $qn2^{kn/2}$ for two integers $p$ and~$q$ such
  that $0 ⩽ p,q < 2^{(n-k)n/2}$.  This means that the subarray~$B$ is tiled
  by the matrices $M N_{\join{ℓ}{m}}$ for $ℓ$ and $t$ satisfying 
  \[
  p2^{kn/2}⩽ ℓ < (p+1)2^{kn/2}
 \text{ and }
  q2^{kn/2}⩽ t < (q+1)2^{kn/2}.
  \] 
  The binary
  expansions of all integers $ℓ$ satisfying \[
  p2^{kn/2} ⩽ ℓ < (p+1)2^{kn/2}
  \]
  start with the same $2^{n(n-k)/2}$ binary digits and the same hold for
  all integers $t$ satisfying 
  \[
  q2^{kn/2}⩽ t < (q+1)2^{kn/2}.
  \]
    This implies
  that $\join{ℓ}{m}$ start with the same $n(k-n)$ digits.  Since the first
  digits of $\join{ℓ}{m}$ are put in the first rows of $N_{\join{ℓ}{m}}$
  which have length~$n$, all matrices $N_{\join{ℓ}{m}}$ for $ℓ$ and $t$ in
  their respective intervals have the same first $n - k$ rows.

  Let $(i,j)$ be a pair of integers such that $0 ⩽ i,j < n$ and let $P$ be
  an array of sizes $k × n$.  We claim that $P$ has exactly one occurrence
  in~$B$ which is congruent to $(i,j)$ modulo~$(n,n)$.
  In order to prove
  it, we show that $P$ has a single such occurrence exactly when a certain
  system of linear equations has a solution.  Furthermore, this solution of
  the system provides the matrix~$N_{\join{ℓ}{m}}$ and thus the integers
  $ℓ$ and~$m$ which, in turn, give the position of the occurrence of~$P$ in
  the subarray~$B$.  An occurrence~$P$ can overlap at most four matrices
  tiling the subarray~$B$.  
  
  Suppose that the upper left corner of the
  occurrence of~$P$ lies in some matrix 
  $M N_{\join{ℓ}{m}}$
  where the
  integers $ℓ$ and~$m$ such that $p2^{kn/2} ⩽ ℓ < (p+1)2^{kn/2}$ and
  $q2^{kn/2} ⩽ m < (q+1)2^{kn/2}$.  The matrix on the right
  of~
$M N_{\join{ℓ}{m}}$ 
and the matrix below it are respectively
$M N_{\join{(ℓ+1)}{m}}$ and $M N_{\join{ℓ}{(m+1)}}$
where $ℓ+1$ and
  $m+1$ must be understood modulo $2^{kn/2}$ in order to remain in the
  right intervals.  Let $P_1$, $P_2$, $P_3$ and~$P_4$ be the parts of~$P$
  that overlap respectively the matrices
$M N_{\join{ℓ}{m}}$,
  $M N_{\join{ℓ}{(m+1)}}$, $M N_{\join{(ℓ+1)}{m}}$ and
  $M N_{\join{(ℓ+1)}{(m+1)}}$.
  They are pictured in
  Figure~\ref{fig:occurrence}.
  
  If $j = 0$, the parts $P_2$ and~$P_4$ of the occurrence are empty.  This is a degenerate case, so we only treat the case $j ⩾ 1$.
  Consider two main cases depending on whether $i + k ⩽ n$ or not.

  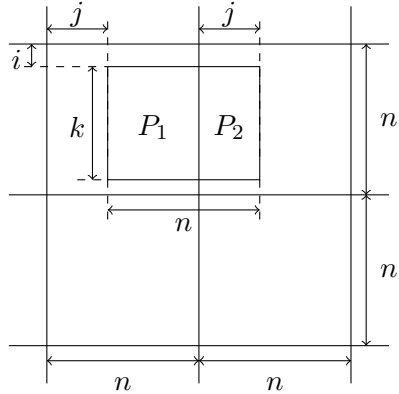
\begin{figure}[htbp]
    \begin{center}
      \begin{tikzpicture}
        \draw (-0.5,4) -- (4.5,4);
        \draw (-0.5,2) -- (4.5,2);
        \draw (-0.5,0) -- (4.5,0);
        \draw (0,-0.5) -- (0,4.5);
        \draw (2,-0.5) -- (2,4.5);
        \draw (4,-0.5) -- (4,4.5);
        \draw (0.8,3.7) -- (2.8,3.7) -- (2.8,2.2) -- (0.8,2.2) -- cycle;
        \draw[dashed] (-0.3,3.7) -- (0.8,3.7);
        \draw[<->] (-0.2,3.7) -- (-0.2,4);
        \node at (-0.4,3.8) {$i$};
        \draw[dashed] (0.8,4.3)  -- (0.8,2.5);
        \draw[<->] (0,4.2) -- (0.8,4.2);
        \node at (0.4,4.4) {$j$};
        \draw[dashed] (2.8,4.3)  -- (2.8,2.5);
        \draw[<->] (2,4.2) -- (2.8,4.2);
        \node at (2.4,4.4) {$j$};
        \draw[dashed] (0.4,2.2) -- (0.8,2.2);
        \draw[<->] (0.6,3.7) -- (0.6,2.2);
        \node at (0.4,2.9) {$k$};
        \draw[dashed] (0.8,2.2)  -- (0.8,1.6);
        \draw[dashed] (2.8,2.2)  -- (2.8,1.6);
        \draw[<->] (0.8,1.8) -- (2.8,1.8);
        \node at (1.8,1.6) {$n$};

        \draw[<->] (0,-0.2) -- (2,-0.2);
        \node at (1,-0.5) {$n$};
        \draw[<->] (2,-0.2) -- (4,-0.2);
        \node at (3,-0.5) {$n$};
        \draw[<->] (4.2,0) -- (4.2,2);
        \node at (4.5,1) {$n$};
        \draw[<->] (4.2,2) -- (4.2,4);
        \node at (4.5,3) {$n$};
        \node at (1.4,2.9) {$P_1$};
        \node at (2.4,2.9) {$P_2$};
      \end{tikzpicture}
    \end{center}
    \caption{An occurrence of array $B$ in $A$ with $i + k ⩽ n$.}
    \label{fig:occurrence2}
  \end{figure}
  
Suppose that $i + k ⩽ n$.  
Then,
the parts $P_3$ and~$P_4$ do not exist and the occurrence
  of~$P$ is reduced to $P_1$ and~$P_2$.  Consider a column of the
  occurrence in~$P_1$, that is, a column of the matrix 
  $M N_{\join{ℓ}{m}}$
  with index~$s$ greater than~$j$ from line $i+1$ to
  line~$i+k$.  This column is obtained by multiplying the lines from $i+1$
  to~$i+k$ of~$M$
  with the column of index~$s$ of~$N_{\join{ℓ}{m}}$.
  Note that the first $n-k$ entries of this latter are known and can be
  considered as constant. The $k$ remaining entries of the column~$s$
  of~$N_{\join{ℓ}{m}}$ are thus the solution of the system $y = Mx$ where
  $y$ is the column of~$P_1$, $M$ is the $k × k$ matrix made of lines from
  $i$ to~$i+k-1$ and columns $n-k$ to~$n-1$ of~$M$
and $x$ are the $k$
  entries of~$N_{\join{ℓ}{m}}$.  By Lemma~\ref{lem:invert-right-sm}, the
  matrix~$M$ is invertible and there is then a unique solution to the
  system.  This means that the $k$ entries of the column of index~$s$
  of~$N_{\join{ℓ}{m}}$ can be found.  An similar reasoning with a column
  of~$P_2$ allows us to find a column $s$ with $s ⩽ j$
  of~$N_{\join{ℓ}{(m+1)}}$ and thus of~$N_{\join{ℓ}{m}}$.

  \begin{figure}[htbp]
    \begin{center}
      \begin{tikzpicture}
        \draw (-0.5,4) -- (4.5,4);
        \draw (-0.5,2) -- (4.5,2);
        \draw (-0.5,0) -- (4.5,0);
        \draw (0,-0.5) -- (0,4.5);
        \draw (2,-0.5) -- (2,4.5);
        \draw (4,-0.5) -- (4,4.5);
        \draw (0.8,2.5) -- (2.8,2.5) -- (2.8,1) -- (0.8,1) -- cycle;
        \draw[dashed] (-0.3,2.5) -- (0.8,2.5);
        \draw[<->] (-0.2,2.5) -- (-0.2,4);
        \node at (-0.4,3.2) {$i$};
        \draw[dashed] (-0.3,0.5) -- (0.8,0.5);
        \draw[<->] (-0.2,0.5) -- (-0.2,2);
        \node at (-0.4,1.2) {$i$};
        \draw[dashed] (0.8,4.3)  -- (0.8,2.5);
        \draw[<->] (0,4.2) -- (0.8,4.2);
        \node at (0.4,4.4) {$j$};
        \draw[dashed] (2.8,4.3)  -- (2.8,2.5);
        \draw[<->] (2,4.2) -- (2.8,4.2);
        \node at (2.4,4.4) {$j$};
        \draw[dashed] (0.4,1) -- (0.8,1);
        \draw[<->] (0.6,1) -- (0.6,2.5);
        \node at (0.4,1.7) {$k$};
        \draw[<->] (0.6,0.5) -- (0.6,1);
        \node at (1.1,0.75) {$n{-}k$};
        \draw[dashed] (2.8,2.5) -- (3.2,2.5);
        \draw[<->] (3,2) -- (3,2.5);
        \node at (3.45,2.25) {$n{-}i$};
        \draw[<->] (0.8,3.3) -- (2.8,3.3);
        \node at (1.8,3.5) {$n$};

        \draw[<->] (0,-0.2) -- (2,-0.2);
        \node at (1,-0.5) {$n$};
        \draw[<->] (2,-0.2) -- (4,-0.2);
        \node at (3,-0.5) {$n$};
        \draw[<->] (4.2,0) -- (4.2,2);
        \node at (4.5,1) {$n$};
        \draw[<->] (4.2,2) -- (4.2,4);
        \node at (4.5,3) {$n$};
        \node at (1.4,2.2) {$P_1$};
        \node at (2.4,2.2) {$P_2$};
        \node at (1.4,1.5) {$P_3$};
        \node at (2.4,1.5) {$P_4$};
      \end{tikzpicture}
    \end{center}
    \caption{An occurrence of array $B$ in $A$ with $i + k > n$.}
    \label{fig:occurrence}
  \end{figure}
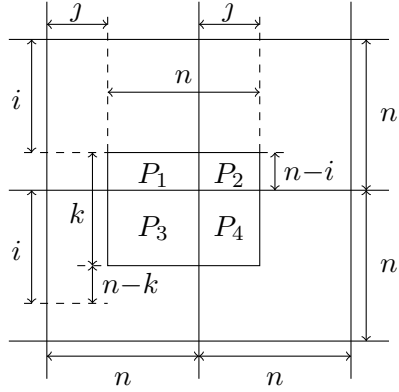
  
  Now we suppose that $i + k > n$ and we make more explicit how the
  matrix~$N_{\join{ℓ}{m}}$ can be computed from the occurrence of~$P$
  modulo $(i,j)$.  
  Let us recall that the $n-k$ top lines
  of~$N_{\join{ℓ}{m}}$, that is, lines $0,…,n-k-1$ are fixed by the
  subarray~$B$.  Let $n-k,…, n-1$ the indices of the lines
  of~$N_{\join{ℓ}{m}}$ which are still unknown.  The computation of these
  $k$ remaining lines is carried out in two phases.  The second and first
  phases respectively compute the lines $n-k,…, r-1$ and $r, …, n-1$ where
  the cutting index~$r$ satisfying $n-k ⩽ r ⩽ n-1$ is defined as follows.
  The integer~$r$ is the least integer such that all the integers
  $τ(r), … , τ(n-1)$ belong to $\{0, …, i+k-n-1\} ∪ \{i, …, n-1\}$.  Note
  that $\{0, …, i+k-n-1\}$ are the indices of the lines crossing $P_3P_4$
  while $\{i, …, n-1\}$ are the indices of the lines crossing $P_1P_2$.  If
  $r = n-k$, all $k$ lines $n-k,…,n-1$ are computed by the first phase and
  the second phase is void.

  We describe more precisely the first phase. Let $s$ be an integer
  satisfying $r ⩽ s ⩽ n-1$ and let us suppose that lines $s+1,…,n-1$ are
  already known and that line~$s$ is still unknown.  The entries of
  line~$s$ are computed from the rightmost one of index~$n-1$ to the
  leftmost one of index~$0$.  Let us consider the line $τ(s)$ in the
  matrix~$M$.  By definition of~$τ$, the entry $M_{τ(s),s}$ is equal to~$1$
  and entries $M_{τ(s),s'}$ for $s' < s$ are equal to~$0$.  We consider two
  cases depending on whether $τ(s)$ belongs to $\{0, …, i+k-n-1\}$ or to
  $\{i, …, n-1\}$.  

Assume $τ(s)$ belongs to
  $\{i, …, n-1\}$.  Let $t$ be an integer such that $j ⩽ t ⩽ n-1$.  The
  multiplication of the line~$τ(s)$ in~$M$ and the column~$t$ of
  $N_{\join{ℓ}{m}}$ gives the entry $(τ(s), t)$ in~$P_1$, that is, the
  entry $(τ(s)-i, t-j)$ in~$P$.  The properties of the line $τ(s)$ in~$M$
  and the fact that lines below line~$r$ in~$N_{\join{ℓ}{m}}$ are already
  known allow us to compute the entry $(s,t)$ in~$N_{\join{ℓ}{m}}$.  Let
  $t$ be an integer such that $0 ⩽ t < j$.  The multiplication of the
  line~$τ(s)$ in~$M$ and the column~$t$ of $N_{\join{ℓ}{m+1}}$ gives the
  entry $(τ(s), t)$ in~$P_2$, that is, $(τ(s)-i, t+n-i)$ in~$P$.
  
  The properties of the line $τ(s)$ in~$M$ and the fact that lines below
  line~$s$ in~$N_{\join{ℓ}{m}}$ are already known allow us to find the
  entry $(s,t)$ in~$N_{\join{ℓ}{(m+1)}}$.  Since all entries $(s,t)$ for
  $j ⩽ t ⩽ n-1$ of $N_{\join{ℓ}{m}}$ and all entries $(s,t)$ for
  $0 ⩽ t < j$ of $N_{\join{ℓ}{(m+1)}}$ and all lines below line~$s$ in
  $N_{\join{ℓ}{m}}$ are known, line~$s$ of $N_{\join{ℓ}{m}}$ is known. 
  
  Now assume  $τ(s)$ belongs to $\{0, …, i+k-n-1\}$.  The same
  reasoning with matrices $N_{\join{(ℓ+1)}{m}}$ and
  $N_{\join{(ℓ+1)}{(m+1)}}$ and parts $P_3$ and $P_4$ of~$P$ allows us to
  compute line~$s$ of $N_{\join{(ℓ+1)}{m}}$ and thus line~$s$ of
  $N_{\join{ℓ}{m}}$.
  
  We finally describe the second phase.  Lines $0,…,n-k-1$ are fixed by the
  subarray~$B$ and lines $r,…,n-1$ have been computed by the first phase.
  Lines $n-k,…,r-1$ are still unknown.  We assume that $n-k < r$ since
  otherwise no line is unknown. It follows from the definition of~$r$ that
  the integer $τ(r-1)$ is then either $i+k-n$ or $i-1$.  Suppose 
  $τ(r-1) = i-1$, the other case is similar.  Consider the
  $(k+r-n) × (k+r-n)$ matrix~$M'$ obtained by selecting lines
  $i-r,…,i+k-n-1$ and columns $n-k,…, r-1$ from the matrix~$M$. The upper
  right entry of~$M'$ is the entry $(i-r, r-1)$ of~$M$.  Since $τ(r-1) =
  i-1$ and the distance between the upper and the lower borders in column
  $r-1$, is $r-1$, the entry $(i-r, r-1)$ lies on the upper border of~$M$.
  By Lemma~\ref{lem:invert-top-sm}, the matrix~$M'$ is invertible.  Note
  that selected lines of~$M'$ are still unused lines of~$P$ and that
  selected columns correspond to still unknown lines of~$N_{\join{ℓ}{m}}$.
  Invertibility of~$M'$ allows us to compute lines $n-k,…, r-1$
  of~$N_{\join{ℓ}{m}}$.  This completes the proof of the theorem.
\end{proof}

The next lemma computes the number of $(n,n,n,n)$-affine arrays.  

\begin{proposition} \label{pro:countaffine}
  Let $d$ be a non-negative integer and let $n=2^d$. Then,, there are
  $2^{n^2+n-1}$ $(n,n,n,n)$-affine arrays.
\end{proposition}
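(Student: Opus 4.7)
The plan is to show that the number of affine arrays is exactly the product of the number of Pascal-like matrices $M = M_d^{m_0, \ldots, m_{n-1}}$ by the number of $n \times n$ matrices $Z$, and that no two distinct pairs $(M, Z)$ yield the same affine array. This factors the count into two simple pieces: $2^{n-1} \cdot 2^{n^2} = 2^{n^2+n-1}$.

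First, I would count the Pascal-like matrices. By Definition~\ref{def:Msigma}, these are parametrized by sequences $(m_0, \ldots, m_{n-1})$ of non-negative integers such that $m_{n-1} = 0$ and, for each $0 \le i < n-1$, either $m_i = m_{i+1}$ or $m_i = m_{i+1}+1$. Starting from the fixed value $m_{n-1}=0$ and working downwards through the indices, at each of the $n-1$ steps there are exactly two choices, and distinct choice sequences yield distinct sequences $(m_0, \ldots, m_{n-1})$ and hence distinct matrices (the columns of $M_d$ being distinct is straightforward since $M_d$ is invertible and rotating different columns by different amounts cannot collapse). This gives $2^{n-1}$ Pascal-like matrices.

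Second, the matrix $Z$ in Definition~\ref{def:affine} ranges freely over $\mathbb{F}_2^{n\times n}$, contributing $2^{n^2}$ choices. Multiplying the two counts gives $2^{n^2 + n - 1}$ candidate pairs.

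The only delicate point, and the one I would check carefully, is injectivity of the map $(M, Z) \mapsto A$. Suppose two pairs $(M, Z)$ and $(M', Z')$ produce the same affine array. Reading off the block at position $(0,0)$, which corresponds to $k = 0$ and hence to $M N_0 \oplus Z = Z$ (since $N_0$ is the zero matrix), we get $Z = Z'$. It then follows that $M N_k = M' N_k$ for every $k$, and since $N_0, \ldots, N_{2^{n^2}-1}$ is the enumeration of \emph{all} $n \times n$ matrices over $\mathbb{F}_2$, there is some index $k^*$ with $N_{k^*}$ equal to the identity, giving $M = M'$. Thus the map is injective, the count $2^{n-1} \cdot 2^{n^2}$ is exact, and the proposition follows.
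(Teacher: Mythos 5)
Your proof is correct and follows essentially the same route as the paper: count the $2^{n-1}$ Pascal-like matrices and the $2^{n^2}$ choices of $Z$, then prove injectivity of $(M,Z)\mapsto A$ by recovering $Z$ from the tiling and using that the $N_k$ range over all matrices (so some $N_{k^*}$ is the identity) to recover $M$. Your way of recovering $Z$ — reading the block at position $(0,0)$, where $N_0=0$ — is a slightly more direct phrasing of the paper's argument, which locates the unique zero block, but the substance is identical.
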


\begin{proof}
  There are exactly $2^{n-1}$ matrices $M_d^{m_0,\ldots,m_{m-1}}$.  Indeed,
  the sequence $m_0,\ldots,m_{n-1}$ is fully determined by the sequence
  $m_0-m_1,\ldots,m_{n-2}-m_{n-1}$ of $n-1$ differences which take their
  value in~$\{0,1\}$.  There are also $2^{n^2}$ possible values for the
  matrix~$Z$ in~$\FT^{n\times n}$.  This proves that the number of
  $(n,n,n,n)$-affine arrays is at most~$2^{n^2+n-1}$.
  
  It remains to show that two $(n,n,n,n)$-affine array obtained for two
  different pairs $(M,Z)$ and $(M',Z')$ are indeed different.  Let
  $N_0,\ldots,N_{2^{n^2}-1}$ be the enumeration of all $n\times n$ matrices
  over~$\FT$.  Let $M$ and $M'$ be two matrices of the form
  $M_d^{m_0,\ldots,m_{n-1}}$.  Let $Z$ and~$Z'$ be two  $n\times n$
  matrices~$\FT$.  Let $U_i = N_i \oplus Z$ and $U'_i = N_i \oplus Z'$ for
  $i=0, ...,  2^n-1$.  Let $W$ and $W'$ be the two placements is defined as
  follows: for each integer~$i$ such that $0 ⩽ i ⩽ 2^{2^n}-1$, the matrix
  $M U_k$ is placed in~$W$ in such a way that its upper left corner cell is
  at position $(\odd(i)n, \even(i)n)$.  Similarly for $W'$ using $U'_i$
  instead of $U_i$.  We claim that if $W = W'$, then $M = M'$ and~$Z = Z'$.
  We suppose that $W = W'$.  Since both matrices $M$ and~$M'$ are
  invertible by Lemma~\ref{lem:invert-right-sm}, $MU_i$ (respectively
  $M'U'_i$) is the zero vector if and only if $U_i$ (respectively $U'_i$)
  is the zero vector, that is, $Z = W_i$ (respectively $Z' = W_i$).  This implies that $Z = Z'$ and thus $U_i = U'_i$ for $i=0,..,  2^n-1$.
  Note that the matrix $U_i$ ranges over all possible $n\times n$ matrices.
  If $MU_i = M'U_i$ for all $i=0 , ...,  2^n-1$, then~$M = M'$.
\end{proof}

For $d$ a non negative integer and $n=2^d$ Definition~\ref{def:affine} gives a construction method of $(n,n,n,n)$-affine arrays. Proposition ~\ref{pro:countaffine} counts how many can be constructed and proves that they are all different.  This completes the proof of Theorem~\ref{thm:constructions}.

\section*{Acknowledgements.}
Both authors are members of SINFIN Laboratory Université de
Paris/CNRS-Universidad de Buenos Aires/CONICET.  This research is supported
by grant PICT 2018-2315, Agencia Nacional de Promoción Científica y
Tencológica de Argentina.
\medskip

\bibliographystyle{plain}
\bibliography{array}

\bigskip
\bigskip
\bigskip

{\small
\begin{minipage}{\textwidth}
\noindent
Ver\'onica Becher \\
Departamento de  Computaci\'on,
Facultad de Ciencias Exactas y Naturales \& ICC \\
Universidad de Buenos Aires \&  CONICET, Argentina \\
vbecher@dc.uba.ar
\bigskip\\
Olivier Carton \\
Institut de Recherche en Informatique Fondamentale \\
Universit\'e Paris Cité, France \\
Olivier.Carton@irif.fr
\end{minipage}
}

\end{document}